\newcommand{\ie}{\emph{i.e.}}
\newcommand{\eg}{\emph{e.g.}}
\newtheorem{theorem}{Theorem}
\newtheorem{lemma}[theorem]{Lemma}
\newtheorem{corollary}[theorem]{Corollary}
\newcommand{\ps}[2]{\langle{#1},{#2}\rangle}
\newcommand{\psH}[2]{\langle{#1},{#2}\rangle_{\H}}
\newcommand{\normH}[1]{\|{#1}\|_\H}
\newcommand{\normHBig}[1]{\Big\|{#1}\Big\|_\H}
\def\R{\ensuremath{\mathds{R}}}
\def\X{\ensuremath{\mathds{X}}} 
\def\H{\ensuremath{\mathds{H}}} 
\newcommand{\cb}[1]{{\ifmmode {\boldsymbol{#1}}\else ${\boldsymbol{#1}}$\fi}}
\newcommand{\cp}[1]{\ifmmode {\mathcal{#1}}\else ${\mathcal{#1}}$\fi}
\newcommand{\bx}{\cb{x}}
\newcommand{\bxi}{\cb{\xi}}
\newcommand{\bK}{{\cb{K}}}
\newcommand{\bkappa}{\cb{\kappa}}
\begin{document}


\title{Entropy of Overcomplete Kernel Dictionaries}

\author{Paul~Honeine,~\IEEEmembership{Member~IEEE}
\thanks{P.~Honeine is with the Institut Charles Delaunay (CNRS), Universit\'e de technologie de Troyes, 10000, Troyes, France. Phone: +33(0)325715625; Fax: +33(0)325715699; E-mail: paul.honeine@utt.fr
}}

\markboth{}
{Honeine: Entropy of Overcomplete Kernel Dictionaries}



\sloppy

\maketitle

\begin{abstract}
In signal analysis and synthesis, linear approximation theory considers a linear decomposition of any given signal in a set of atoms, collected into a so-called dictionary. Relevant sparse representations are obtained by relaxing the orthogonality condition of the atoms, yielding overcomplete dictionaries with an extended number of atoms. More generally than the linear decomposition, overcomplete kernel dictionaries provide an elegant nonlinear extension by defining the atoms through a mapping kernel function (\eg, the gaussian kernel). Models based on such kernel dictionaries are used in neural networks, gaussian processes and online learning with kernels. 

The quality of an overcomplete dictionary is evaluated with a diversity measure the distance, the approximation, the coherence and the Babel measures. In this paper, we develop a framework to examine overcomplete kernel dictionaries with the entropy from information theory. Indeed, a higher value of the entropy is associated to a further uniform spread of the atoms over the space. For each of the aforementioned diversity measures, we derive lower bounds on the entropy. Several definitions of the entropy are examined, with an extensive analysis in both the input space and the mapped feature space.
\end{abstract}

\begin{keywords} 
Generalized R\'enyi entropy, Shannon entropy, sparse approximation, dictionary learning, kernel-based methods, Gram matrix, machine learning, pattern recognition.
\end{keywords}

%
\IEEEpeerreviewmaketitle

%
%

\section{Introduction}

\PARstart{S}{parsity} in representation has gained increasing popularity in signal and image processing, for pattern recognition, denoising and compression \cite{Elad2010book}. A sparse representation of a given signal consists in decomposing it on a set of elementary signals, called atoms and collected in a so-called dictionary. In the linear formalism, the signal is written as a linear combination of the dictionary atoms. This decomposition is unique when the latter defines a basis, and in particular with orthogonal dictionaries such as with the Fourier basis. Since the 1960's, there has been much interest in this direction with the use of predefined dictionaries, based on some analytical form, such as with the wavelets 
\cite{Mal98}. Predefined dictionaries have been widely investigated in the literature for years, owing to the mathematical simplicity of such structured dictionaries when dealing with orthogonality (as well as bi-orthogonality). When dealing with sparsity, analytical dictionaries perform poorly in general, due to their rigide structure imposed by the orthogonality. 

Within the last 15 years, a new class of dictionaries has emerged with dictionaries learned from data, thus with the ability to adapt to the signal under scrutiny. While the Karhunen-Lo\`eve transform --- also called principal component analysis in advanced statistics \cite{PCA} --- falls in this class, the relaxation of the orthogonality condition delivers an increased flexibility with overcomplete dictionaries, \ie, when the number of atoms (largely) exceeds the signal dimension. Several methods have been proposed to construct oversomplete dictionaries by solving a highly non-convex optimization problem, such as the method of optimal directions \cite{Engan99}, its singular-value-decomposition (SVD) counterpart \cite{KSVD}, and the ``convexification'' method \cite{Mairal09}.

Overcomplete dictionaries are more versatile to provide relevant representations, owing to an increased diversity. Several measures have been proposed to ``quantify'' the diversity of a given dictionary. The simplest measure of diversity is certainly the cardinality of the dictionary, \ie, the number of atoms. While this measure is too simplistic, several diversity measures have been proposed by examining relations between atoms, either in a pairwise fashion or in a more thorough way. The most used measure to characterize a dictionary is the coherence, which is the largest pairwise correlation between its atoms \cite{Tropp06Just_Relax}. By using the largest cumulative correlation between an atom and all the other atoms of the dictionary, this yields the more exhaustive Babel measure \cite{Tro04}. Over the last twenty years or so, the coherence and its variants (such as the Babel measure) have been used for the matching pursuit algorithm \cite{Mallat93matchingpursuit} and the basis pursuit with arbitrary dictionaries \cite{Don03}, with theoretical results on the approximation quality studied in \cite{Gil03a,Tro04}; see also the extensive literature on compressed sensing \cite{Elad2010book}.

Beyond the literature on linear approximation, several diversity measures for overcomplete dictionary analysis have been investigated separately in the literature, within different frameworks. This is the case of the distance measure, which corresponds to the smallest pairwise distance between all atoms, as often considered in neural networks. Indeed, in resource-allocating networks for function interpolation, the network of gaussian units is assigned a new unit if this unit is {\em distant enough} to any other unit already in the network \cite{Platt91,Vukovic2013}. It turns out that these units operate as atoms in the approximation model, with the corresponding dictionary having a small distance measure. While the distance measure of a given dictionary relies only on its nearest pair of atoms, a more thorough measure is the approximation measure, which corresponds to the least error of approximating any atom of the dictionary with a linear combination of its other atoms. This measure of diversity has been investigated in machine learning with gaussian processes \cite{Csato02}, online learning with kernels for nonlinear adaptive filtering \cite{Pokharel2009}, and more recently kernel principal component analysis \cite{12.tpami}.

In order to provide a framework that encloses all the aforementioned methods, we consider the \emph{reproducing kernel Hilbert space} formalism. This allows to generalize the well-known linear model used in sparse approximation to a nonlinear one, where each atom is substituted by a nonlinear one given with a kernel function. This yields the so-called kernel dictionaries, where each atom lives in a feature space, the latter being defined with some nonlinear transformation of the input space. While the linear kernel yields the conventional linear model, as given in the literature of linear sparse approximation, the use of nonlinear kernels such as the gaussian kernel, allows to include in our study neural networks with ressource-allocating networks, nonlinear adaptive filtering with kernels and gaussian processes. 

All the aforementioned diversity measures allow to quantify the heterogeneity within the dictionary under scrutiny. In this paper, we derive connections between these measures and the entropy in information theory (which is also related to the definition of entropy in other fields, such as thermodynamics and statistical mechanics) \cite{Cover06}. Indeed, the entropy measures the disorder or randomness within a given system. By considering the generalized R\'enyi entropy, which englobes the definitions given by Shannon, Hartley, as well as the quadratic formulation, we show that any overcomplete kernel dictionary with a given diversity measure has a lower-bounded entropy. These results on the high values of the entropy illustrate that the atoms are favorably spread uniformly over the space. We provide a comprehensive analysis, for any kernel type and any entropy definition, within the R\'enyi entropy framework as well as the more recent nonadditive entropy proposed by Tsallis \cite{Tsallis09,Tsallis_entropy}. Finally, we provide an entropy analysis in the feature space by deriving lower bounds depending on the diversity measures. As a consequence, we connect the diversity measures between both input and feature spaces.

The remained of this paper is organized as follows. Next section introduces the sparse approximation problem, in its conventional linear model as well as its nonlinear extension with the kernel formalism. Section~\ref{sec:measures} presents the most used diversity measures for quantifying overcomplete dictionaries, while Section~\ref{sec:fundamental} provides a preliminary exploration with results that will be used throughout this paper. Section~\ref{sec:entropy} is the core of this work, where we define the entropy and examine it in the input space, while Section~\ref{sec:entropy_H} extends this analysis to the feature space. Section~\ref{sec:final_remarks} concludes this paper.

\subsection*{Related work}

In \cite{Gir02}, Girolami considered the estimation of the quadratic entropy with a set of samples, by using the Parzen estimator based on a normalized kernel function. This formulation was investigated in regularization networks, and in particular \emph{least-squares support vector machines} (LS-SVM), in order to reduce the computational complexity by pruning samples that do not contribute sufficiently to the entropy \cite{LSSVM}. More recently, an online learning scheme was proposed in \cite{Jung06} for LS-SVM by using the approximation measure as a sparsification criterion. In our paper, we derive the missing connections between this criterion and the entropy maximization.

Richard, Bermudez and Honeine considered in \cite{Ric09.tsp} the analysis of the quadratic entropy of a kernel dictionary in terms of its coherence. We provide in our paper a framework to analyse overcomplete dictionaries with a more extensive examination, in both input and feature spaces, and generalizing to other entropy definitions and all types of kernels. The conducted analysis examines several diversity measures, including, but not limited to, the coherence measure.


\section{A primer on overcomplete (kernel) approximation}

In this section, we introduce the sparse approximation problem, in its conventional linear model as well as the kernel-based formulation. We conclude this section with an outline of the issues addressed in this paper.

\subsection{A primer on sparse approximation}

Consider a Banach space $\X$ of $\R^d$, denoted input space. The approximation theory studies the representation of a given signal $\bx$ of $\X$ with a dictionary of atoms (\ie, set of elementary signals), $\bx_1, \bx_2, \ldots, \bx_n \in \X$, and estimating their fractions in the signal under scrutiny. In linear approximation, the decomposition takes the form:
\begin{equation}\label{eq:linear}
	\bx \approx \sum_{i=1}^n \alpha_i \, \bx_i.
\end{equation}
This representation is unique when the atoms form a basis, by approximating the signal with its projection onto the span of the atoms, namely $\alpha_i = \bx_i^\top \bx$. Examples that involve orthonormal bases include the Fourier transform and the discrete cosine transform, as well as the data-dependent 
Karhunen-Lo\`eve transform (\ie, the PCA).

Beyond these orthogonal bases, the relaxation of the orthogonality provides more flexibility with the use of overcomplete dictionaries, which allows to investigate different constraints more properly, such as the sparsity of the representation. In this case, the coefficients $\alpha_i$ in \eqref{eq:linear} are obtained by promoting the sparsity of the representation. This optimization problem is often called sparse coding, assuming that the dictionary is known. In view of the vector $[\alpha_1 ~~ \alpha_2 ~~ \cdots ~~ \alpha_n]^\top$, sparsity can be promoted by minimizing its $\ell_0$ pseudo-norm, which counts the number of non-zero entries, or its $\ell_1$ norm, which is the closest convex norm to the $\ell_0$ pseudo-norm 
\cite{Has09}. 

Since the seminal work \cite{dict_learn96} where Olshausen and Field considered learning the atoms from a set of available data, data-driven dictionaries have been widely investigated. A large class of approaches have been proposed to solve iteratively the optimization problem by alternating between the dictionary learning (\ie, estimating the atoms $\bx_i$) and the sparse coding (\ie, estimating the coefficients $\alpha_i$). The former problem is essentially tackled 
with the maximum likelihood principle of the data or the maximum {\em a posteriori} probability of the dictionary. The latter corresponds to the sparse coding problem. The best known methods for solving the optimization problem\footnote{In practice, one has several signals $\bx$ in order to construct the dictionary, resulting in a Frobenius norm minimization.} 
\begin{equation}\label{eq:linear_pb}
	\arg\mathop{\min_{\alpha_i, \bx_i}} _{i=1\cdots n} 
	\Big\| \bx - \sum_{i=1}^n \alpha_i \, \bx_i \Big\|^2,
\end{equation}
subject to some sparsity promoting constraint, are the method of optimal directions \cite{Engan99} and the K-SVD algorithm \cite{KSVD}, where the dictionary is determined respectively with the Moore-Penrose pseudo-inverse and the SVD scheme. For more details, see \cite{Elad2010book} and references therein. It is worth noting that the sparsity constraint yields a difficult optimization problem, even when the model is linear in both coefficients and atoms.

\subsection{Kernel-based approximation}

Nonlinear models provide a more challenging issue. The formalism of \emph{reproducing kernel Hilbert spaces} (RKHS) provides an elegant and efficient framework to tackle nonlinearities. To this end, the signals $\bx_1, \bx_2, \ldots, \bx_n$ are mapped with a nonlinear function into some feature space $\H$, as follows:
\begin{eqnarray*}
	\X &\mapsto& \H \\
	\bx_i &\to& \kappa(\bx_i,\cdot)
\end{eqnarray*}
Here, $\kappa\!:\X \times \X\rightarrow\R$ is a positive definite kernel and the feature space $\H$ is the so-called reproducing kernel Hilbert space. Let $\psH{\cdot}{\cdot}$ and $\normH{\cdot}$ denote respectively the inner product and norm in the induced space $\H$. This space has some interesting properties, such as the reproducing property which states that any function $\psi(\cdot)$ of $\H$ can be evaluated at any $\bx_i$ of $\X$ using $\psi(\bx_i) = \psH{\psi(\cdot)}{\kappa(\bx_i,\cdot)}$. Moreover, we have the kernel trick, that is $\psH{\kappa(\bx_i,\cdot)}{\kappa(\bx_j,\cdot)} = \kappa(\bx_i,\bx_j)$ for any $\bx_i,\bx_j \in \X$. In particular, $\normH{\kappa(\cdot,\bx_i)}^2 = \kappa(\bx_i,\bx_i)$.

Kernels can be roughly divided in two categories, projective kernels as functions of the data inner product (\ie, $\ps{\bx_i}{\bx_j}$), and radial kernels as functions of their distance (\ie, $\|\bx_i - \bx_j\|$). The most used kernels and there expressions are given in \tablename~\ref{tab:kernels}. From these kernels, only the gaussian and the radial-based exponential kernels are unit-norm, that is $\normH{\kappa(\bx,\cdot)}=1$ for any $\bx \in \X$. In this paper, we do not restrict ourselves to a particular kernel. We denote 
\begin{equation*}
	{~r^2 = \inf_{\bx \in \X} \kappa(\bx,\bx)~}
		\qquad \text{and} \qquad 
	 {~R^2 = \sup_{\bx \in \X} \kappa(\bx,\bx),~}
\end{equation*}
where $\kappa(\bx,\bx) = \normH{\kappa(\bx,\cdot)}^2$. For unit-norm kernels, we get $R=r=1$.

While the linear kernel yields the conventional model given in \eqref{eq:linear}, nonlinear kernels such as the gaussian kernel provide the models investigated in RBF neural networks, gaussian processes \cite{gpml} and kernel-based machine learning \cite{Shawetaylor_Cristianini}, including the celebrated support vector machines \cite{Vap95}. For a set of data $\bx_1, \bx_2, \ldots, \bx_n \in \X$ and a given kernel $\kappa(\cdot,\cdot)$, the induced RKHS $\H$ is defined such as any element $\psi(\cdot)$ of $\H$ takes the form
\begin{equation}\label{eq:repr}
	\psi(\cdot) = \sum_{i=1}^{n} \alpha_i \, \kappa(\bx_i,\cdot).
\end{equation}
When dealing with an approximation problem in the same spirit of \eqref{eq:linear}-\eqref{eq:linear_pb}, the element $\psi(\cdot)$ is approximated by $\kappa(\bx,\cdot)$. Compared to the linear case given in \eqref{eq:linear}, it is easy to see that the above model is still linear in the coefficients $\alpha_i$, as well as the ``atoms'' $\kappa(\bx_i,\cdot)$, while it is nonlinear with respect to $\bx_i$. Indeed, the resulting optimization problem consists in minimizing the residual in the RKHS, with
\begin{equation*}
	\arg\mathop{\min_{\alpha_i, \bx_i}} _{i=1\cdots n} 
	 \normHBig{\kappa(\bx,\cdot) - \sum_{i=1}^n \alpha_i \, \kappa(\bx_i,\cdot) }^2.
\end{equation*}
On the one hand, the estimation of the coefficients is similar to the one given in the linear case with \eqref{eq:linear_pb}; the classical (linear) sparse coders can be investigated for this purpose. On the other hand, the dictionary determination is more difficult, since the model is nonlinear in the $\bx_i$; thus, conventional techniques such as the K-SVD algorithm can no longer be used. It turns out that the estimation of the elements in the input space is a tough optimization problem, known in the literature as the pre-image problem \cite{11.spm}. More recently, the authors of \cite{13.spl.dictionary,14.sp.dictionary} adjusted the elements $\bx_i$ in the input space for nonlinear adaptive filtering with kernels. In another context, the authors of \cite{14.mlsp.nmf,14.tpami.knmf} estimated these elements for the kernel non-negative matrix factorization.

\begin{table}
\caption{The most used kernels with their expressions, including tunable parameters $p,\sigma>0$ and $c\geq 0$. These kernels are grouped in two categories: projective kernels as functions of $\ps{\bx_i}{\bx_j}$, and radial kernels as functions of $\|\bx_i - \bx_j\|$.}
\begin{center}
\renewcommand{\arraystretch}{1.4} 
\begin{tabular}{clc}
& Kernel & $\kappa(\bx_i,\bx_j)$  \\\hline\hline
\multirow{3}{*}{\rotatebox{90}{projective~}} & Linear & $\ps{\bx_i}{\bx_j}$ \\
& Polynomial & $\left(\ps{\bx_i}{\bx_j} + c\right)^p$\\
& Exponential & $\exp\left(\ps{\bx_i}{\bx_j}\right)$ \\
\hline
\multirow{3}{*}{\rotatebox{90}{radial~}} & Inverse multiquadratic & $\left(\|\bx_i - \bx_j\|^2 + \sigma \right)^{-p}$ \\
& Exponential & $\exp\left(\frac{-1}{\sigma}\|\bx_i - \bx_j\|\right)$ \\
& Gaussian & $\exp\left(\frac{-1}{2\sigma^2} \|\bx_i - \bx_j\|^2\right)$ \\
\hline\hline
\end{tabular}
\end{center}
\label{tab:kernels}
\end{table}

\subsection{Addressed issues}

In either analysis or synthesis of overcomplete (kernel) dictionaries, 
with the grow in the number of atoms, an increase in the heterogeneity of the atoms is needed. Such diversification requires that the atoms are not too ``close'' to each other. Depending on the definition of closeness, several diversity measures have been proposed in the literature. This is the case when the closeness is given in terms of the metric, as given with the distance measure for a pairwise measure between atoms, or the approximation measure for a more thorough measure. This is also the case when the collinearity of the atoms is considered, such as with the coherence and the Babel measures. These diversity measures are described in detail in Section~\ref{sec:measures}, within the formalism for a kernel dictionary $\{\kappa(\bx_1,\cdot), \kappa(\bx_2,\cdot), \ldots, \kappa(\bx_n,\cdot)\}$.

In this paper, we connect these diversity measures to the entropy from information theory \cite{Cover06}. Indeed, from the viewpoint of information theory, the set $\{\bx_1, \bx_2, \ldots, \bx_n\}$ can be viewed as a finite source alphabet. A fundamental measure of information is the entropy, which quantifies the disorder or randomness of a given system or set. It is also associated to the number of bits needed, in average, to store or communicate the set under investigation
. A detailed definition of the entropy is given in Section~\ref{sec:entropy}, with connections between the entropy of the set $\{\bx_1, \bx_2, \ldots, \bx_n\}$ and the aforementioned diversity measures of the associated kernel dictionary $\{\kappa(\bx_1,\cdot), \kappa(\bx_2,\cdot), \ldots, \kappa(\bx_n,\cdot)\}$. Several entropy definitions are also investigated, including the generalized R\'enyi entropy and the Tsallis entropy. Finally, Section~\ref{sec:entropy_H} extends this analysis to the RKHS, by studying the entropy of set of atoms $\{\kappa(\bx_1,\cdot), \kappa(\bx_2,\cdot), \ldots, \kappa(\bx_n,\cdot)\}$.


\section{Diversity measures}\label{sec:measures}

In this section, we present measures that quantify the diversity of a given dictionary $\{\kappa(\bx_1,\cdot), \kappa(\bx_2,\cdot), \ldots, \kappa(\bx_n,\cdot)\}$. 
Each diversity measure is associated to a sparsification criterion for online learning, in order to construct dictionaries with large diversity measures. 

\subsection{Cardinality}

The cardinality of the dictionary, namely the number $n$ of atoms, is the simplest measure. However, such measure does not take into account that some atoms can be close to each others, \eg, duplicata.


\subsection{Distance measure}\label{sec:distance}

A simple measure to characterize a dictionary is the smallest distance between all pairs of its atoms, namely
\begin{equation*}
    \mathop{\min_{i,j=1\cdots n}}_{i \neq j} \normH{\kappa(\bx_{i},\cdot) -  \kappa(\bx_{j},\cdot)},
\end{equation*}
where
\begin{equation}\label{eq:dist_equality}
	\normH{\kappa(\bx_{i},\cdot) - \kappa(\bx_{j},\cdot)}^2
	 = \kappa(\bx_{i},\bx_{i}) -  2\, \kappa(\bx_{i},\bx_{j})
	  + \, \kappa(\bx_{j},\bx_{j}).
\end{equation}
In the following, we consider a tighter measure by using the distance between any two atoms, up to a scaling factor, which is a tighter measure since we have
\begin{equation}\label{eq:dist_equiv}
    \normH{\kappa(\bx_{i},\cdot) - \kappa(\bx_{j},\cdot)}
    \geq \min_{\xi} \normH{\kappa(\bx_{i},\cdot) - \xi \kappa(\bx_{j},\cdot)}.
\end{equation}
A dictionary is said to be $\delta$-distant when
\begin{equation*}
   \delta = \mathop{\min_{i,j=1\cdots n}}_{i \neq j} \min_{\xi} \normH{\kappa(\bx_{i},\cdot) - \xi \, \kappa(\bx_{j},\cdot)}.
\end{equation*}
Since the above distance is equivalent to the residual error of approximating any atom by its projection onto another atom, the optimal scaling factor $\xi$ takes the value $\kappa(\bx_{i},\bx_{j}) / \kappa(\bx_{j},\bx_{j})$, yielding 
\begin{equation*}
    \delta^2 = \mathop{\min_{i,j=1\cdots n}}_{i \neq j}
    \Bigg( \kappa(\bx_{i},\bx_{i}) - \frac{\kappa(\bx_{i},\bx_{j})^2}{\kappa(\bx_{j},\bx_{j})} \Bigg).
\end{equation*}
When dealing with unit-norm atoms, this expression boils down to $\delta^2 = 1 - \kappa(\bx_{i},\bx_{j})^2$.

A sparsification criterion for online learning is studied in ressource-allocating networks \cite{Platt91,Huang05ageneralized} with the ``novelty criterion'', by imposing a lower bound on the distance measure of the dictionary. Thus, any candidate atom is included in the dictionary if the distance measure of the latter does not fall below a given threshold that controls the level of sparseness.

\subsection{Approximation measure}\label{sec:approx}

While the distance measure relies only on the nearest two atoms, the approximation measure provides a more exhaustive analysis 
by quantifying the capacity of approximating any atom with a linear combination of the other atoms of the dictionary. A dictionary is said to be $\delta$-approximate if the following is satisfied:
\begin{equation}\label{eq:approx}
    \delta = \min_{i=1\cdots n} \min_{\xi_1\cdots \xi_n}\normHBig{\kappa(\bx_{i},\cdot) - \mathop{\sum_{j=1}^n}_{j \neq i} \xi_j\,\kappa(\bx_{j},\cdot)}.
\end{equation}
This expression corresponds to the residual error of projecting any atom onto the subspace spanned by the others atoms. By nullifying the derivative of the above cost function with respect to each coefficient $\xi_j$, we get the optimal vector of coefficients
\begin{equation}\label{eq:approx.proj}
    \bxi = \bK_{\!_{\setminus\!\{\!i\!\}}\!}^{-1} \bkappa_{\!_{\setminus\!\{\!i\!\}}\!}(\bx_{i}),
\end{equation}
Here, $\bK_{\!_{\setminus\!\{\!i\!\}}\!}$ and $\bkappa_{\!_{\setminus\!\{\!i\!\}}\!}(\bx_{i})$ are obtained by removing the entries associated to $\bx_{i}$ from $\bK$ and $\bkappa(\bx_{i})$, respectively, where $\bK$ is the Gram matrix of entries $\kappa(\bx_i,\bx_j)$ and $\bkappa(\cdot)$ is the column vector of entries $\kappa(\bx_j,\cdot)$, for $i,j=1,\ldots, n$. By plugging the above expression in \eqref{eq:approx}, we obtain:
\begin{equation}\label{eq:approx1}
	\delta^2 = 
	\min_{i=1\cdots n}
	\kappa(\bx_{i},\bx_{i}) - \bkappa_{\!_{\setminus\!\{\!i\!\}}\!}(\bx_{i})^\top \bK_{\!_{\setminus\!\{\!i\!\}}\!}^{-1} \bkappa_{\!_{\setminus\!\{\!i\!\}}\!}(\bx_{i}).
\end{equation}

The sparsification criterion associated to the approximation measure is studied in \cite{Bau01,Csato01} and more recently in \cite{Eng04} for system identification and \cite{12.tpami} for kernel principal component analysis. This criterion constructs dictionaries with a high approximation measure, thus including any candidate atom in the dictionary if it cannot be well approximated by a linear combination of atoms already in the dictionary, for a given approximation threshold.

\subsection{Coherence measure} \label{sec:coherence}

In the literature of sparse linear approximation, the coherence is a fundamental quantity to characterize dictionaries. It corresponds to the largest correlation between atoms of a given dictionary, or mutually between atoms of two dictionaries. While initially introduced for linear matching pursuit in \cite{Mallat93matchingpursuit}, it has been studied for the union of two bases \cite{Donoho01uncertaintyprinciples}, for basis pursuit with arbitrary dictionaries \cite{Don03}, for the analysis of the approximation quality \cite{Gil03a,Tro04}. While most work consider the use of a linear measure, we explore in the following the coherence of a kernel dictionary, as initially studied in \cite{Hon07.isit}.

For a given dictionary, the coherence is defined by the largest correlation between all pairs of atoms, namely
\begin{equation*}
\mathop{\max_{i,j=1\cdots n}}_{i \neq j} 
\frac{|\psH{\kappa(\bx_{i},\cdot)}{\kappa(\bx_{j},\cdot)}|}
        {\normH{\kappa(\bx_{i},\cdot)} \normH{\kappa(\bx_{j},\cdot)}}.
\end{equation*}
It is easy to see that this definition can be written, for a so-called $\gamma$-coherent dictionary, as follows:
\begin{equation}\label{eq:coher}
	\gamma = \mathop{\max_{i,j=1\cdots n}}_{i \neq j} 
	\frac{|{\kappa(\bx_{i},\bx_{j})}|} {\sqrt{\kappa(\bx_{i},\bx_{i}) \, \kappa(\bx_{j},\bx_{j})}},
\end{equation}
For unit-norm atoms, we get 
	$\displaystyle\mathop{\max_{i,j=1\cdots n}}_{i \neq j} 
	|\kappa(\bx_{i},\bx_{j})|$.

The coherence criterion for sparsification constructs a ``low-coherent'' dictionary, thus enforcing an upper bound on the cosine angle between each pair of atoms \cite{Ric09.tsp}. In this case, any candidate atom is included in the dictionary if the coherence of the latter does not exceed a given threshold.This threshold controls the level of sparseness of the dictionary, where a null value yields an orthogonal basis.

\subsection{Babel measure}\label{sec:Babel}

While the coherence relies only on the most correlated atoms in the dictionary, a more thorough measure is the Babel measure which considers the largest cumulative correlation between an atom and all the other atoms of the dictionary. The Babel measure can be defined in two ways. The first one is by connecting it to the coherence measure, with a definition related to the cumulative coherence, namely
\begin{equation}\label{eq:babelD_normalized}
\max_{i=1\cdots n} \mathop{\sum_{j=1}^n}_{j \neq i}
\frac{|{\kappa(\bx_{i},\bx_{j})}|} {\sqrt{\kappa(\bx_{i},\bx_{i}) \, \kappa(\bx_{j},\bx_{j})}}.
\end{equation}
The second (and most conventional) way to define the Babel measure is by investigating an analogy with the norm operator \cite{Gil03b,Tro04}. Indeed, while the coherence is the $\infty$-norm of the Gram matrix when dealing with unit-norm atoms, the Babel measure explores the $\ell_1$ matrix-norm, where $\|\bK\|_1 = \max_i \sum_j |\kappa(\bx_{i}, \bx_{j})|$. As a consequence, a dictionary is said to be $\gamma$-Babel when
\begin{equation}\label{eq:babelD}
      \gamma
      = \max_{i=1\cdots n} \mathop{\sum_{j=1}^n}_{j \neq i} |\kappa(\bx_{i}, \bx_{j})|.
\end{equation}
Connecting this definition with \eqref{eq:babelD_normalized} --- for not necessary unit-norm atoms --- is straightforward, since the latter can be box-bounded for any $\gamma$-Babel dictionary defined by \eqref{eq:babelD}, with
 \begin{equation*}
\frac{\gamma}{R^2} \leq \max_{i=1\cdots n} \mathop{\sum_{j=1}^n}_{j \neq i}
\frac{|{\kappa(\bx_{i},\bx_{j})}|} {\sqrt{\kappa(\bx_{i},\bx_{i}) \, \kappa(\bx_{j},\bx_{j})}}
\leq \frac{\gamma}{r^2}.
 \end{equation*}
For this reason and for the sake of simplicity, we consider the definition \eqref{eq:babelD} in this paper. 

The sparsification criterion associated to the Babel measure constructs dictionaries with a low cumulative coherence \cite{Fan2014}. To this end, any candidate atom $\kappa(\bx_t,\cdot)$ is included in the dictionary if (and only if)
\begin{equation}\label{eq:babel.crit}
      \sum_{j=1}^n |\kappa(\bx_t, \bx_{j})|
\end{equation}
does not exceed a given positive threshold.
 
\section{Some fundamental results}\label{sec:fundamental}

Before proceeding throughout this paper with a rigorous analysis of any overcomplete dictionary in terms of its diversity measure, we provide in the following some results that are essential to our study. These results provide an attempt to bridge the gap between the different diversity measures.

\subsection{Coherence versus Babel measure}

The following theorems connect the coherence of a dictionary to its Babel measure by quantifying the Babel measure of a $\gamma$-coherent dictionary, and vice-versa. The following theorem has been known for a while in the case of unit-norm atoms.
\begin{theorem}\label{th:babel.coher}
	A $\gamma$-coherent dictionary has a Babel measure that does not exceed $(n-1)\gamma R^2$.
\end{theorem}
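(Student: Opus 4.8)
The plan is to obtain the bound by a direct chain of inequalities starting from the coherence constraint. The core observation is that coherence controls each \emph{normalized} off-diagonal entry of the Gram matrix, whereas the Babel measure sums the \emph{unnormalized} entries; bridging the two only requires an upper bound on the diagonal terms, which is exactly what $R^2$ supplies.

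First I would recall that, for a $\gamma$-coherent dictionary, the definition \eqref{eq:coher} gives, for every pair $i \neq j$,
\begin{equation*}
	|\kappa(\bx_i,\bx_j)| \leq \gamma \sqrt{\kappa(\bx_i,\bx_i)\,\kappa(\bx_j,\bx_j)}.
\end{equation*}
Next I would invoke $\kappa(\bx,\bx) \leq R^2$ for all $\bx \in \X$, so that $\sqrt{\kappa(\bx_i,\bx_i)\,\kappa(\bx_j,\bx_j)} \leq R^2$, yielding the uniform pairwise bound $|\kappa(\bx_i,\bx_j)| \leq \gamma R^2$. Finally I would substitute this into the Babel definition \eqref{eq:babelD}: for each fixed index $i$ the inner sum runs over the $n-1$ indices $j \neq i$, hence it is bounded by $(n-1)\gamma R^2$; since this bound does not depend on $i$, taking the maximum over $i$ leaves it unchanged, which is precisely the claimed inequality.

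There is no substantial obstacle in this argument—it is a routine majorization. The only point requiring a little care is keeping track of the normalization: the coherence is defined through the geometric mean of the diagonal entries, so passing to the unnormalized Babel quantity is exactly where the factor $R^2$ enters. As a consistency check, for unit-norm kernels one has $R=1$ and the statement collapses to the familiar bound $(n-1)\gamma$ known for unit-norm atoms, in agreement with the remark preceding the theorem.
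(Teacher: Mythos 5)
Your proof is correct and follows essentially the same argument as the paper: both bound each off-diagonal Gram entry via the coherence definition $|\kappa(\bx_i,\bx_j)| \leq \gamma\sqrt{\kappa(\bx_i,\bx_i)\,\kappa(\bx_j,\bx_j)} \leq \gamma R^2$ and then sum over the $n-1$ terms in the Babel definition \eqref{eq:babelD}. The only difference is cosmetic---you bound each entry before summing, while the paper first bounds the sum by $(n-1)$ times the largest entry---so the two arguments are interchangeable.
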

\begin{proof}
Following the definition \eqref{eq:babelD}, the Babel of a $\gamma$-coherent dictionary is upper-bounded as follows:
\begin{align*}
& \max_{i=1\cdots n} \mathop{\sum_{j=1}^n}_{j \neq i} |\kappa(\bx_{i}, \bx_{j})|\\
 	&\qquad\qquad \leq (n-1) \mathop{\max_{i,j=1\cdots n}}_{i \neq j} | \kappa(\bx_{i},\bx_{j}) |\\
	&\qquad\qquad \leq (n-1) \gamma \mathop{\max_{i,j=1\cdots n}}_{i \neq j}\sqrt{\kappa(\bx_{i},\bx_{i}) \, \kappa(\bx_{j},\bx_{j})}\\
	&\qquad\qquad \leq (n-1) \gamma R^2.
\end{align*}
\end{proof}

Furthermore, it is also easy to provide an upper bound on the coherence of a dictionary with a given Babel measure, as given in the following theorem.
\begin{theorem}\label{th:coher.babel}
	A $\gamma$-Babel dictionary has a coherence that does not exceed $\gamma/r^2$.
\end{theorem}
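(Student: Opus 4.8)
The plan is to bound each individual normalized correlation appearing in the coherence definition \eqref{eq:coher} directly by $\gamma/r^2$, and then take the maximum over all pairs. The key observation is that a single off-diagonal magnitude $|\kappa(\bx_i,\bx_j)|$ is always dominated by the full row sum that defines the Babel measure \eqref{eq:babelD}, so the Babel value controls every entry individually.

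First I would fix an arbitrary pair $i \neq j$ and note that a single term never exceeds the cumulative sum over the whole row, $|\kappa(\bx_i,\bx_j)| \leq \sum_{k \neq i} |\kappa(\bx_i,\bx_k)|$. By the definition of a $\gamma$-Babel dictionary in \eqref{eq:babelD}, this row sum is in turn bounded by its maximum over $i$, which equals $\gamma$; hence $|\kappa(\bx_i,\bx_j)| \leq \gamma$ for every pair. Next I would bound the denominator of \eqref{eq:coher} from below: since $r^2 = \inf_{\bx \in \X} \kappa(\bx,\bx)$, both diagonal entries satisfy $\kappa(\bx_i,\bx_i) \geq r^2$ and $\kappa(\bx_j,\bx_j) \geq r^2$, so that $\sqrt{\kappa(\bx_i,\bx_i)\,\kappa(\bx_j,\bx_j)} \geq r^2$. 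Combining the numerator upper bound with the denominator lower bound yields, for each pair, a normalized correlation of at most $\gamma/r^2$, and taking the maximum over $i \neq j$ gives the stated bound on the coherence.

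The only point requiring care — rather than a genuine obstacle — is that the symbol $\gamma$ denotes different quantities in the two definitions: the coherence in \eqref{eq:coher} and the Babel measure in \eqref{eq:babelD}. I would therefore keep the two values notationally distinct throughout, treating $\gamma$ as the \emph{given} Babel value and proving that the coherence is at most $\gamma/r^2$, so that no circularity can slip in. Everything else is a direct application of the elementary inequality ``one term $\leq$ its row sum'' together with the uniform lower bound $r^2$ on the kernel diagonal.
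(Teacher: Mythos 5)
Your proposal is correct and follows essentially the same route as the paper: the paper also lower-bounds the denominator by $r^2$ and then controls the numerator via the matrix-norm inequality $\|\cdot\|_{\max}\leq\|\cdot\|_{\infty}$, which is exactly your ``one entry is at most its row sum'' step phrased abstractly. Your version merely spells this out entrywise (and is slightly more careful in excluding the diagonal term, matching the Babel definition \eqref{eq:babelD} precisely), so there is no substantive difference.
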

\begin{proof}
The proof follows from the relation
\begin{align*}
\mathop{\max_{i,j=1\cdots n}}_{i \neq j} 
	\frac{|{\kappa(\bx_{i},\bx_{j})}|} {\sqrt{\kappa(\bx_{i},\bx_{i}) \, \kappa(\bx_{j},\bx_{j})}}
&\leq \mathop{\max_{i,j=1\cdots n}}_{i \neq j} 
	\frac{|{\kappa(\bx_{i},\bx_{j})}|}{r^2},
\end{align*}
and the inequality between matrix norms: 
	$\|\cdot\|_{\max} \leq \|\cdot\|_\infty.$
\end{proof}

\subsection{Analysis of a $\delta$-approximate dictionary}

The following theorem is fundamental in the analysis of a dictionary resulting from the approximation criterion.

\begin{theorem}\label{th:approx}
	A $\delta$-approximate dictionary has a Babel measure that does not exceed 
	$R^2 - \delta^2,$	
	and a coherence measure that does not exceed
	$$\frac{R^2 - \delta^2}{r^2}.$$
\end{theorem}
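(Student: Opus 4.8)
The plan is to establish the Babel bound first and then read off the coherence bound for free. Indeed, once the dictionary is shown to be $\gamma$-Babel with $\gamma\le R^2-\delta^2$, Theorem~\ref{th:coher.babel} immediately yields a coherence no larger than $\gamma/r^2\le(R^2-\delta^2)/r^2$, which is precisely the second assertion. The whole difficulty therefore lies in the Babel estimate.

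For that estimate I would start from the closed form \eqref{eq:approx1}. Because $\delta^2$ is the \emph{minimum} over $i$ of the Schur complement $\kappa(\bx_i,\bx_i)-\bkappa_{\setminus\{i\}}(\bx_i)^\top\bK_{\setminus\{i\}}^{-1}\bkappa_{\setminus\{i\}}(\bx_i)$, every index $i$ --- in particular the one attaining the maximum in \eqref{eq:babelD} --- obeys
\begin{equation*}
\bkappa_{\setminus\{i\}}(\bx_i)^\top\bK_{\setminus\{i\}}^{-1}\bkappa_{\setminus\{i\}}(\bx_i)\;\le\;\kappa(\bx_i,\bx_i)-\delta^2\;\le\;R^2-\delta^2 .
\end{equation*}
This already exhibits the target right-hand side; it remains to dominate the cumulative correlation $\sum_{j\neq i}|\kappa(\bx_i,\bx_j)|$ by this quadratic form. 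To this end I would invoke the optimality conditions behind \eqref{eq:approx.proj}: writing $\bxi=\bK_{\setminus\{i\}}^{-1}\bkappa_{\setminus\{i\}}(\bx_i)$, the normal equations give $\kappa(\bx_i,\bx_j)=\sum_{k\neq i}\xi_k\,\kappa(\bx_j,\bx_k)$, and the orthogonality of the residual to each retained atom gives $\sum_{j\neq i}\xi_j\,\kappa(\bx_i,\bx_j)=\kappa(\bx_i,\bx_i)-\delta_i^2\le R^2-\delta^2$, where $\delta_i^2\ge\delta^2$ is the residual at index $i$. A complementary, term-by-term handle comes from restricting the feasible set of \eqref{eq:approx} to a single atom: since this can only enlarge the residual, one gets $\kappa(\bx_i,\bx_j)^2/\kappa(\bx_j,\bx_j)\le R^2-\delta^2$ for every pair, in the spirit of the distance measure of Section~\ref{sec:distance}.

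The step I expect to be the main obstacle is reconciling these two viewpoints. The approximation measure controls a \emph{quadratic} form $\bkappa^\top\bK^{-1}\bkappa$, whereas the Babel is an $\ell_1$ quantity, and the optimal coefficients $\xi_j$ carry neither the signs nor the magnitudes of the correlations $\kappa(\bx_i,\bx_j)$; converting the signed inner product $\sum_j\xi_j\kappa(\bx_i,\bx_j)$ into the absolute sum $\sum_j|\kappa(\bx_i,\bx_j)|$ without losing the constant is delicate. I would attack it through the variational identity $\bkappa^\top\bK^{-1}\bkappa=\max_{\bv}(\bkappa^\top\bv)^2/(\bv^\top\bK\bv)$, testing it against the sign pattern $\bv=\operatorname{sign}\big(\bkappa_{\setminus\{i\}}(\bx_i)\big)$ so that the numerator becomes exactly $\big(\sum_{j\neq i}|\kappa(\bx_i,\bx_j)|\big)^2$, and then controlling the resulting factor $\bv^\top\bK_{\setminus\{i\}}\bv$ by exploiting the positive definiteness of $\bK_{\setminus\{i\}}$ together with $r^2\le\kappa(\bx_j,\bx_j)\le R^2$; keeping this last factor under control is the crux of the argument.
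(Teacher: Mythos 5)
Your opening reduction (prove the Babel bound, then read off the coherence bound from Theorem~\ref{th:coher.babel}) is exactly the paper's strategy, and your starting inequality $\bkappa_{\setminus\{i\}}(\bx_i)^\top\bK_{\setminus\{i\}}^{-1}\bkappa_{\setminus\{i\}}(\bx_i)\le R^2-\delta^2$ is correct. But the step you yourself flag as the crux is a genuine gap, and it cannot be closed. With $\bv=\mathrm{sign}\big(\bkappa_{\setminus\{i\}}(\bx_i)\big)$, the factor $\bv^\top\bK_{\setminus\{i\}}\bv$ contains the diagonal contribution $\sum_{j\neq i}\kappa(\bx_j,\bx_j)\ge (n-1)r^2$, so the Rayleigh-quotient route only yields $\big(\sum_{j\neq i}|\kappa(\bx_i,\bx_j)|\big)^2\le (R^2-\delta^2)\,\bv^\top\bK_{\setminus\{i\}}\bv$, a bound that grows with $n$ and can never be compressed to the dimension-free constant $R^2-\delta^2$. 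The obstruction is not a lack of ingenuity on your part: the claimed statement is false. Take unit-norm atoms ($R=r=1$, e.g.\ the gaussian kernel) with equal pairwise correlations $\kappa(\bx_i,\bx_j)=c\in(0,1)$. For $n=2$, \eqref{eq:approx1} gives $\delta^2=1-c^2$ while the Babel measure \eqref{eq:babelD} equals $c$, so the theorem would require $c\le R^2-\delta^2=c^2$, which fails for every $c\in(0,1)$; for $n=3$ the requirement becomes $2c\le 2c^2/(1+c)$, which is worse still.

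You should also know that the paper's own proof founders on precisely the conversion you identified as delicate. It substitutes the sign vector $\xi_j=\mathrm{sign}(\kappa(\bx_i,\bx_j))$ into the \emph{linear} form $\kappa(\bx_i,\bx_i)-\bkappa_{\setminus\{i\}}(\bx_i)^\top\bxi$ and asserts that the result is at least $\delta^2$. That inequality holds only at the minimizer $\bxi=\bK_{\setminus\{i\}}^{-1}\bkappa_{\setminus\{i\}}(\bx_i)$, where the linear form happens to coincide with the quadratic residual; for an arbitrary $\bxi$ the valid statement is
\begin{equation*}
\delta^2 \;\le\; \kappa(\bx_i,\bx_i) - 2\,\bkappa_{\setminus\{i\}}(\bx_i)^\top\bxi + \bxi^\top\bK_{\setminus\{i\}}\bxi ,
\end{equation*}
and the dropped term $\bxi^\top\bK_{\setminus\{i\}}\bxi$ is exactly the quantity you could not control. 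So the paper resolves your "crux" only by an invalid step, which is why the theorem it produces admits counterexamples. What does survive is your term-by-term observation: restricting \eqref{eq:approx} to a single atom gives $\kappa(\bx_i,\bx_j)^2\le (R^2-\delta^2)\,\kappa(\bx_j,\bx_j)$, hence a coherence of at most $\sqrt{1-\delta^2/R^2}$ and a Babel measure of at most $(n-1)R\sqrt{R^2-\delta^2}$. This is the correct, provable analogue of the theorem, and it is in fact the very estimate the paper itself uses for $\delta$-distant dictionaries in Section~\ref{sec:entropy_quadratic}; any downstream use of the bound $R^2-\delta^2$ (such as the entropy bound $\log n-\log(2R^2-\delta^2)$ for $\delta$-approximate dictionaries) should be corrected accordingly.
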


\begin{proof}
For a $\delta$-approximate dictionary, we have from \eqref{eq:approx.proj}: $\bK_{\!_{\setminus\!\{\!i\!\}}\!} \bxi = \bkappa_{\!_{\setminus\!\{\!i\!\}}\!}(\bx_{i})$, for any $i=1,2, \ldots, n$. By plugging this relation in \eqref{eq:approx1}, we obtain
\begin{equation*}
	\min_{\bxi} \kappa(\bx_{i},\bx_{i}) - \bkappa_{\!_{\setminus\!\{\!i\!\}}\!}(\bx_{i})^\top \bxi \geq \delta^2.
\end{equation*}
By considering the special case of the vector $\bxi$ with $\xi_j=\mathrm{sign}(\kappa(\bx_{i},\bx_{j}))$, for any $j=1,2, \ldots, n$ and $j\neq i$, we get
\begin{equation*}
	 \mathop{\sum_{j=1}^n}_{j \neq i} |\kappa(\bx_{i},\bx_{j})|
	\leq \kappa(\bx_{i},\bx_{i}) - \delta^2,
\end{equation*}
for all $i=1,2, \ldots, n$. As a consequence, 
\begin{equation*}
	 \max_{i=1\cdots n} \mathop{\sum_{j=1}^n}_{j \neq i} |\kappa(\bx_{i},\bx_{j})|
	\leq \max_{i=1\cdots n} \kappa(\bx_{i},\bx_{i}) - \delta^2
	\leq R^2 - \delta^2.
\end{equation*}
This concludes the proof for the Babel measure, since it is the left-hand-side in the above expression, while the upper bound on the coherence measure is obtained from the aforementioned connection between the coherence and the Babel measures as given in Theorem~\ref{th:coher.babel}.
\end{proof}

%
%
%


\section{Entropy analysis in the input space}\label{sec:entropy}

The entropy measures the disorder or randomness within a given system. The R\'enyi entropy provides a generalization of well-known entropy definitions, such as Shannon and Harley entropies as well as the quadratic entropy (see \tablename~\ref{tab:entropy}). It is defined for a given order $\alpha$ by 
\begin{equation}\label{eq:entropy_int}
	H_\alpha = \frac{1}{1-\alpha} \log \int_\X \big(P(\bx)\big)^\alpha \, d\bx,
\end{equation}
for the probability distribution $P$ that governs all elements $\bx$ of $\X$. When dealing with discrete random variables as in source coding, 
this definition is restricted to the set $\{\bx_{1}, \bx_{2}, \ldots, \bx_{n}\}$ drawn from the probability distribution $P$, yielding the expression
\begin{equation}\label{eq:entropy_sum}
	H_\alpha = \frac{1}{1-\alpha} \log \sum_{j=1}^n \big(P(\bx_{j})\big)^\alpha.
\end{equation}
Large values of the entropy correspond to a more uniform spread of the data\footnote{It is well-known for the Shannon entropy (\ie, where $\alpha \to 1$) that the uniform distribution yields the largest entropy. This property seems to extend to the case of any non-zero order, including $\alpha \to \infty$ where we get the min-entropy. See \tablename~\ref{tab:entropy} for the expressions of well-known entropies.}. Since this probability distribution is unknown in practice, it is often approximated with a Parzen window estimator (also called kernel density estimator). The estimator takes the form 
\begin{equation}\label{eq:P_estim}
	\widehat{P}(\bx)=\frac{1}{n}\sum_{j=1}^n w(\|\bx - \bx_{j}\|),
\end{equation}
for a given window function $w$ centered at each $\bx_{j}$. For more details, see for instance \cite{KECA}.

In the following, we provide lower bounds on the entropy of an overcomplete dictionary, in terms of its diversity measure. To this end, we initially restrict ourselves to the case of the quadratic entropy (\ie, $\alpha = 2$), first with the gaussian kernel then with any type of kernel, before generalizing these results to any order $\alpha$ of the R\'enyi entropy as well as the Tsallis entropy.


\subsection{The quadratic entropy with the gaussian kernel}\label{sec:entropy_gaussian}

Before generalizing to any window function in Section~\ref{sec:entropy_quadratic} and any order in Section~\ref{sec:entropy_Renyi}, we restrict ourselves first to the case of the gaussian window function with the quadratic entropy. The quadratic entropy is defined by $H_2 = -\log \sum_{j=1}^n \big(P(\bx_{j})\big)^2$. Considering the normalized gaussian window 
\begin{equation*}
     w(\|\bx - \bx_{j}\|)=\frac{1}
        {(\sqrt{\pi}\sigma)^d}\,\exp\left(-\|\bx-\bx_{j}\|^2/\sigma^2 \right),
\end{equation*}
for some bandwidth parameter $\sigma$, the Parzen estimator becomes
\begin{equation*}
     \widehat{P}(\bx)=\frac{1}{n}\sum_{j=1}^n\frac{1}
        {(\sqrt{\pi}\sigma)^d}\,\exp\left(-\|\bx-\bx_{j}\|^2/\sigma^2 \right).
\end{equation*}
Since the convolution of two gaussian distributions leads to another gaussian distribution, then $H_2 \approx-\log \sum_{j=1}^n \big(\widehat{P}(\bx_{j})\big)^2$ becomes
\begin{align}\label{eq:entropy_gaussian}
 \nonumber
     H_2 
     & \approx	 -\log \Bigg(\frac{1}{n^2}\sum_{i,j=1}^n\frac{\kappa(\bx_{i},\bx_{j})}{(2\pi\sigma^2)^{d/2}} \Bigg)
\\	 &= \frac{d}{2} \log\!\left(2\pi\sigma^2\right)
	 -\log \Bigg(\frac{1}{n^2}\sum_{i,j=1}^n \kappa(\bx_{i},\bx_{j}) \Bigg)
	 ,
\end{align}
where $\kappa(\bx_{i},\bx_{j}) = \exp\left(\frac{-1}{2\sigma^2}\|\bx_{i}-\bx_{j}\|^2\right)$ is the gaussian kernel. This expression shows that the sum of the entries in the Gram matrix describes the diversity of the dictionary elements, a result corroborated in \cite{Gir02} and more recently in \cite{KECA}. This property was investigated in \cite{LSSVM} for pruning the LS-SVM, by removing samples with the smallest entries in the Gram matrix.

Each diversity measure studied in Section~\ref{sec:measures} yields a lower bound on the entropy of the dictionary under scrutiny. To shown this, we consider first the Babel measure with a $\gamma$-Babel dictionary. Following the Babel definition in \eqref{eq:babelD}, the entropy given in \eqref{eq:entropy_gaussian} is lower-bounded as follows:
\begin{equation*}
     H_2 \geq \frac{d}{2} \log\!\left(2\pi\sigma^2\right)
	 + \log n - \log(1+ \gamma),
\end{equation*}
where we have used the following upper bound on the summation: 
\begin{align*}
	\sum_{i,j=1}^n \kappa(\bx_{i},\bx_{j}) 
	&= \sum_{i=1}^n \kappa(\bx_{i},\bx_{i})
		+ \sum_{i=1}^n\mathop{\sum_{j=1}^n}_{j \neq i} \kappa(\bx_{i}, \bx_{j})
\\	&\leq n ( 1+ \gamma).
\end{align*}
This result provides the core of the proof. Indeed, Theorem~\ref{th:coher.babel} shows that this result holds also for a $\gamma$-coherent dictionary. Furthermore, we can improve this bound for the coherence measure, since $\sum_{i=1}^n \sum_{j=1,j \neq i}^n \kappa(\bx_{i}, \bx_{j}) \leq n(n-1) \gamma$, thus yielding the following lower bound on the entropy
\begin{equation*}
     H_2 \geq \frac{d}{2} \log\!\left(2\pi\sigma^2\right)
	 + \log n - \log \big( 1+ (n-1)\gamma \big).
\end{equation*}
This result is also shared with a $\delta$-distant dictionary, by substituting $\gamma$ with $\sqrt{1- \delta^2}$, since the distance is  equivalent to the coherence when dealing with normalized kernels. Finally, Theorem~\ref{th:approx} establishes the connection with a $\delta$-approximate dictionary, where the above upper bound becomes
\begin{equation*}
     H_2 \geq \frac{d}{2} \log\!\left(2\pi\sigma^2\right)
	 + \log n - \log \left(2 - \delta^2 \right).
\end{equation*}

All these results provide lower bounds on the entropy, with the following observations. These bounds increase with the number of elements in the dictionary, \ie, $n$, which is obvious as the diversity grows. They 
decrease when the coherence and the Babel measures increase, while they increase when the distance and the approximation measures increase. These results provide quantitative details that confront the fact that, when using a sparsification criterion for online learning, low values of the coherence and Babel thresholds provide less ``correlated'' atoms and thus more diversity within the dictionary, as opposed to 
high values of the distance and approximation thresholds.

\begin{table}
\caption{The most known entropies as special cases of the generalized R\'enyi entropy.}
\begin{center}
\renewcommand{\arraystretch}{1.5} 
\begin{tabular}{@{}l@{\qquad}c@{\qquad}c@{}}
Entropy & order $\alpha$ & $H_\alpha$  \\\hline\hline
Harley entropy & $\alpha = 0$ & $\log n$ \\
Shannon entropy & $\alpha \to 1$ & $\displaystyle - \sum_{j=1}^n P(\bx_{j}) \log P(\bx_{j})$ \\
Quadratic entropy & $\alpha = 2$ & $\displaystyle - \log \sum_{j=1}^n \big(P(\bx_{j})\big)^2$ \\
Min-entropy & $\alpha \to \infty $ & $\displaystyle\min_{j=1\cdots n} -\log P(\bx_{j})$\\
\hline\hline
\end{tabular}
\end{center}
\label{tab:entropy}
\end{table}

\subsection{The quadratic entropy with any kernel}\label{sec:entropy_quadratic}

The results presented so far can be extended to any kernel, even non-unit-norm kernels. To see this, we define the Parzen estimator in a RKHS, by writing the integral $\int_\X \widehat{P}(\bx)^2 \, d\bx$ as the quadratic norm $\|\widehat{P}\|_\cp{H}^2$ of 
$$\widehat{P}(\cdot)=\frac{1}{n}\sum_{i=1}^n\kappa(\bx_{i},\cdot),$$ 
where the norm is given in the subspace spanned by the kernel functions $\kappa(\bx_{1},\cdot), \kappa(\bx_{2},\cdot), \ldots, \kappa(\bx_{n},\cdot)$. Therefore, we have 
\begin{equation*}
    H_2 \approx-\log\|\widehat{P}\|_\cp{H}^2
        = - \log \left(\frac{1}{n^2}\sum_{i,j=1}^n\kappa(\bx_{i},\bx_{j})\right).
\end{equation*}
By following the same steps as in Section~\ref{sec:entropy_gaussian}, 
 we can derive the following lower bounds on the quadratic entropy: 
\begin{itemize}
\item $\log n -\log \big(R^2 +  (n-1) R \sqrt{R^2 - \delta^2} \big)$ for a $\delta$-distant dictionary.
\item $\log n - \log \big(2R^2 - \delta^2\big)$ for a $\delta$-approximate dictionary.
\item $\log n - \log \big(R^2+ (n-1)\gamma R^2\big)$ for a $\gamma$-coherent dictionary.
\item $\log n - \log(R^2 + \gamma)$ for a $\gamma$-Babel dictionary.
\end{itemize}
Before providing the proof of these results, it is worth noting that the conclusion and discussion conducted in the case of the gaussian kernel are still satisfied in the general case of any kernel type. 

\begin{proof}
The bounds for the $\delta$-approximate and $\gamma$-Babel dictionaries are straightforward from Theorem~\ref{th:approx} and the definition in \eqref{eq:babelD}. The lower bounds for $\gamma$-coherent and $\delta$-distant dictionaries are a bit trickier to prove. To show this, we use for the former the following relation
\begin{align*}
    H_2 &\geq
        -\log \Bigg(\frac{1}{n^2}\sum_{i=1}^n\kappa(\bx_{i},\bx_{i})
\\      & \qquad\qquad
        +\frac{\gamma}{n^2} \sum_{i=1}^n\mathop{\sum_{j=1}^n}_{j \neq i}
        \sqrt{\kappa(\bx_{i},\bx_{i})\,\kappa(\bx_{j},\bx_{j})}\Bigg)
        \\&\geq \log n -\log \big(R^2 + (n-1)\gamma R^2 \big),
\end{align*}
and for the latter the following relation
\begin{align*}
    H_2 &\geq
        -\log \Bigg(\frac{1}{n^2}\sum_{i=1}^n\kappa(\bx_{i},\bx_{i})
\\      & \qquad\qquad
        +\frac{1}{n^2} \sum_{i=1}^n\mathop{\sum_{j=1}^n}_{j \neq i}
        \sqrt{\big(\kappa(\bx_{i},\bx_{i}) - \delta^2 \big) \kappa(\bx_{j},\bx_{j})}\Bigg)
        \\&\geq \log n -\log \big(R^2 +  (n-1) R \sqrt{R^2 - \delta^2} \big).
\end{align*}
\end{proof}


\subsection{Generalization to R\'enyi and Tsallis entropies}\label{sec:entropy_Renyi}

So far, we have investigated the quadratic entropy and derived lower bounds for each diversity measure. It turns out that these results can be extended to the general R\'enyi entropy and Tsallis entropy, as shown next. Special cases of the former are listed in \tablename~\ref{tab:entropy}, including the Harley or maximum entropy which is associated to the cardinality of the set, the Shannon entropy which is essentially the Gibbs entropy in statistical thermodynamics, the quadratic entropy also called collision entropy, as well as the min-entropy which is the smallest measure in the family of R\'enyi entropies.

\begin{corollary}
Any lower bound $\zeta$ on the quadratic entropy provides lower bounds on the Hartley entropy $H_0$, the Shannon $H_1$, and the min-entropy $H_\infty$, with
\begin{equation*}
\zeta \leq H_1 \leq H_0 \qquad \text{and} \qquad  \tfrac12 \zeta \leq H_\infty.
\end{equation*}
\end{corollary}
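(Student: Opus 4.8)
The plan is to exploit the general ordering of R\'enyi entropies in the order parameter $\alpha$, and then relate the Shannon and min-entropies to the quadratic entropy $H_2$, whose lower bound $\zeta$ is assumed. The starting point is the well-known monotonicity property of the R\'enyi entropy: for a fixed probability distribution, $H_\alpha$ is non-increasing in $\alpha$. This immediately gives $H_0 \geq H_1 \geq H_2 \geq H_\infty$, from which two of the claimed inequalities follow directly: since $\zeta \leq H_2 \leq H_1 \leq H_0$, we obtain $\zeta \leq H_1 \leq H_0$. The only nontrivial part is the min-entropy bound $\tfrac12 \zeta \leq H_\infty$, because monotonicity runs the wrong way here ($H_\infty \leq H_2$), so we cannot simply inherit $\zeta$.

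First I would recall and briefly justify the monotonicity of $H_\alpha$ in $\alpha$, either by citing it as a standard fact or by noting that $\frac{d}{d\alpha} H_\alpha \leq 0$ follows from the nonnegativity of a relative-entropy-type expression (a consequence of Jensen's inequality applied to the tilted distribution $P^\alpha / \sum_j P(\bx_j)^\alpha$). This handles the chain $\zeta \leq H_1 \leq H_0$ in one stroke. For the min-entropy, the key relation to establish is $H_2 \leq 2 H_\infty$, equivalently $H_\infty \geq \tfrac12 H_2 \geq \tfrac12 \zeta$. This comes from comparing the collision probability $\sum_j P(\bx_j)^2$ with the maximum probability $p_{\max} = \max_j P(\bx_j)$: since each term $P(\bx_j)^2 \leq p_{\max} \, P(\bx_j)$ and the probabilities sum to one, we get $\sum_j P(\bx_j)^2 \leq p_{\max}$, so that $-\log \sum_j P(\bx_j)^2 \geq -\log p_{\max}$, i.e.\ $H_2 \leq H_\infty$. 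But this gives the wrong direction again; the factor of $\tfrac12$ must instead come from the reverse comparison $p_{\max}^2 = (\max_j P(\bx_j))^2 \leq \sum_j P(\bx_j)^2$, which yields $2 H_\infty = -2\log p_{\max} \geq -\log \sum_j P(\bx_j)^2 = H_2$. Rearranging gives $H_\infty \geq \tfrac12 H_2 \geq \tfrac12 \zeta$, as required.

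Thus the structure of the proof is: (i) apply monotonicity in $\alpha$ to obtain $\zeta \leq H_2 \leq H_1 \leq H_0$; (ii) establish the elementary inequality $p_{\max}^2 \leq \sum_j P(\bx_j)^2$ to obtain $H_2 \leq 2 H_\infty$, whence $\tfrac12 \zeta \leq \tfrac12 H_2 \leq H_\infty$. I expect the main subtlety to lie in keeping the directions of the inequalities straight, since both the monotonicity of $H_\alpha$ and the comparison of the collision probability with the maximum probability naturally produce bounds that must be carefully oriented. The factor $\tfrac12$ for the min-entropy is the tell-tale sign that the binding relation is $p_{\max}^2 \leq \sum_j P(\bx_j)^2$ rather than the looser $\sum_j P(\bx_j)^2 \leq p_{\max}$, and getting this comparison right is the crux of the argument.
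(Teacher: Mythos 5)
Your proof is correct and follows essentially the same route as the paper: the paper also gets $\zeta \leq H_1 \leq H_0$ from Jensen's inequality (the explicit instance $\sum_{j} P(\bx_{j}) \log P(\bx_{j}) \leq \log \sum_{j} P(\bx_{j})^2$, which is exactly what your appeal to monotonicity of $H_\alpha$ in $\alpha$ amounts to), and its min-entropy bound rests on precisely your key comparison $\max_j P(\bx_j)^2 \leq \sum_j P(\bx_j)^2$, giving $H_2 \leq 2H_\infty$ and hence $\tfrac12\zeta \leq H_\infty$. One cosmetic slip in your discarded aside: from $-\log\sum_j P(\bx_j)^2 \geq -\log p_{\max}$ the conclusion is $H_2 \geq H_\infty$ (not $H_2 \leq H_\infty$), which is indeed the wrong direction as you say, so the slip does not affect your final argument.
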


\begin{proof}
The proof is due to the Jensen's inequality and the concavity of the R\'enyi entropy for nonnegative orders. First, the relation of the Shannon entropy is given by exploring the following inequality:
$$\sum_{j=1}^n P(\bx_{j}) \log P(\bx_{j}) \leq \log \sum_{j=1}^n \big(P(\bx_{j})\big)^2.$$
The connection to the Hartley entropy is straightforward, with $H_0 = \log n$. Finally, it is more trickier to study the min-entropy, since it is the smallest entropy measure in the family of R\'enyi entropies, as a consequence it is the strongest way to measure the information content. To provide a lower bound on the min-entropy, we use the relations
\begin{equation*}
\log \sum_{j=1}^n \big(P(\bx_{j})\big)^2 
\geq \log \max_{j=1\cdots n} \big(P(\bx_{j})\big)^2 
= 2  \log \max_{j=1\cdots n} P(\bx_{j}),
\end{equation*}
which yields the following inequality: $H_2 \leq 2 H_\infty$.
\end{proof}

Furthermore, one can easily extend these results to the class of the Tsallis entropy, also called nonadditive entropy, defined by the following expression for a given parameter $q$ (called entropic-index) \cite{Tsallis09,Tsallis_entropy}:
\begin{equation*}
	\frac{1}{q-1} \Big( 1 - \sum_{j=1}^n \big(P(\bx_{j})\big)^q \Big).
\end{equation*}
To this end, the aforementioned lower bounds on the R\'enyi entropy can be extended to the Tsallis entropy by using for instance the well-known relation $\log u \leq u-1$ for any $u \geq 0$.

As a consequence, the lower bounds on the quadratic entropy given in Sections~\ref{sec:entropy_gaussian} and \ref{sec:entropy_quadratic} can be explored to other orders of R\'enyi entropy and Tsallis entropy. 


\section{Entropy in the feature space}\label{sec:entropy_H}

By analogy with the entropy analysis in the input space conducted in Section \ref{sec:entropy}, we propose to revisit it in the feature space, as given in this section. By examining the pairwise distance between any two atoms of the investigated dictionary, we first establish in Section~\ref{sec:entropy_H_1} a topological analysis of overcomplete dictionaries. This analysis is explored in Section~\ref{sec:entropy_H_2} with the study of the entropy of the atoms in the feature space. By providing lower bounds in terms of the diversity measures, these results provide connections to the entropy analysis conducted in the previous section.

\subsection{Fundamental analysis}\label{sec:entropy_H_1}

The following theorem is used in the following section for the analysis of the atoms of a kernel dictionary.

\begin{theorem}\label{th:topo_H}
For any dictionary with a non-zero approximation measure, or a non-unit coherence measure, or a Babel measure below $r^2$, we have a low-bounded distance measure.
\end{theorem}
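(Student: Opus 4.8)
The plan is to show that each of the three hypotheses forces the squared distance measure
\[
\delta^2 = \mathop{\min_{i,j=1\cdots n}}_{i \neq j}\left(\kappa(\bx_{i},\bx_{i}) - \frac{\kappa(\bx_{i},\bx_{j})^2}{\kappa(\bx_{j},\bx_{j})}\right)
\]
to stay bounded away from zero. I would handle the coherence and Babel hypotheses together through a single inequality, and treat the approximation hypothesis separately by a direct projection argument, assuming throughout the nondegenerate regime $r>0$ in which the Babel condition ``below $r^2$'' is meaningful.

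For a $\gamma$-coherent dictionary, I would first factor each term of the minimum as
\[
\kappa(\bx_{i},\bx_{i})\left(1 - \frac{\kappa(\bx_{i},\bx_{j})^2}{\kappa(\bx_{i},\bx_{i})\,\kappa(\bx_{j},\bx_{j})}\right),
\]
and observe that the bracketed ratio is exactly the squared normalized correlation appearing in \eqref{eq:coher}, hence at most $\gamma^2$, while $\kappa(\bx_{i},\bx_{i}) \geq r^2$. Both factors being positive when $\gamma<1$, I obtain the explicit bound $\delta^2 \geq r^2(1-\gamma^2)$, which is strictly positive exactly for a non-unit coherence measure. The Babel hypothesis then reduces to this case: by Theorem~\ref{th:coher.babel} a $\gamma$-Babel dictionary has coherence at most $\gamma/r^2$, so a Babel measure below $r^2$ pushes the coherence below $1$ and the same lower bound applies.

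For the approximation hypothesis I would argue directly rather than through the coherence. The single-atom residual $\min_{\xi}\normH{\kappa(\bx_{i},\cdot) - \xi\,\kappa(\bx_{j},\cdot)}$ underlying the distance measure is the residual of projecting $\kappa(\bx_{i},\cdot)$ onto the span of one atom, whereas the approximation measure \eqref{eq:approx} projects onto the span of all the remaining atoms. Since the former span is contained in the latter, projecting onto the larger subspace cannot increase the residual; thus for each $i$ the approximation residual is bounded above by the distance from $\kappa(\bx_{i},\cdot)$ to any other single atom. Taking the minimum over $j\neq i$ and then over $i$ shows that the approximation measure never exceeds the distance measure, so a non-zero approximation measure immediately lower-bounds the distance measure by the same positive value.

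The manipulations are short, and the only step requiring care is the approximation case: one must justify the monotonicity of the orthogonal-projection residual across the nested subspaces $\operatorname{span}\{\kappa(\bx_{j},\cdot)\} \subseteq \operatorname{span}\{\kappa(\bx_{j},\cdot):j\neq i\}$ and then check that this inequality survives the two nested minimizations (over the pair $(i,j)$ for the distance measure and over $i$ alone for the approximation measure). Everything else is algebra together with the previously established Theorem~\ref{th:coher.babel}.
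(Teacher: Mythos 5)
Your proof is correct, and for two of the three hypotheses it takes a genuinely different route from the paper's. For the coherence case, the paper works with the plain pairwise distance and shows that $u^2-2\gamma uv+v^2$ (with $u^2=\kappa(\bx_i,\bx_i)$, $v^2=\kappa(\bx_j,\bx_j)$) has negative discriminant, hence is strictly positive; this yields positivity but no explicit constant. You instead factor the scaled residual and get the explicit bound $\delta^2\geq r^2(1-\gamma^2)$, which is stronger in two respects: it bounds the scaled distance measure that Section~\ref{sec:distance} formally calls the distance measure (and hence, by \eqref{eq:dist_equiv}, the plain distance as well), and it is quantitative, which is what Theorem~\ref{th:proba_estim_H} and Corollary~\ref{th:entropy_H} actually consume as the value $\epsilon$. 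For the Babel case, the paper argues directly via $\normH{\kappa(\bx_{i},\cdot)-\kappa(\bx_{j},\cdot)}^2\geq 2r^2-2\gamma>0$, whereas your reduction to the coherence case through Theorem~\ref{th:coher.babel} is equally valid and gives $r^2-\gamma^2/r^2>0$. The approximation case is essentially identical in both proofs: feasibility of the single-atom coefficients inside the full minimization, i.e., monotonicity of projection residuals over nested spans. One caveat: your blanket assumption $r>0$ is harmless for the Babel hypothesis (it is implied by $\gamma<r^2$), but it is not part of the coherence hypothesis, and your bound $r^2(1-\gamma^2)$ becomes vacuous if $\inf_{\bx\in\X}\kappa(\bx,\bx)=0$; the paper's discriminant argument sidesteps this because it only uses $\kappa(\bx_j,\bx_j)>0$ for the finitely many atoms actually in the dictionary. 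The repair in your framework is immediate --- replace $r^2$ by $\min_{i}\kappa(\bx_i,\bx_i)$, which is strictly positive for any finite dictionary --- so this is a cosmetic fix rather than a gap.
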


\begin{proof}
The proof is straightforward for a $\delta$-approximate dictionary, since
\begin{align*}
\!\!    \normH{\kappa(\bx_{i},\cdot) - \kappa(\bx_{j},\cdot)}
     & \geq 
      \!  \min_{\xi_1\cdots\xi_n} \normHBig{\kappa(\bx_{i},\cdot) -\! \mathop{\sum_{j=1}^n}_{j \neq i} \xi_j \, \kappa(\bx_{j},\cdot)}
   \\& \geq  \delta.
\end{align*}

For the coherence measure, we consider the pairwise distance in terms of kernels as given in \eqref{eq:dist_equality}. Since a $\gamma$-coherent dictionary satisfies 
\begin{equation*}
	\mathop{\max_{i,j=1\cdots n}}_{i \neq j} 
	\frac{|{\kappa(\bx_{i},\bx_{j})}|} {\sqrt{\kappa(\bx_{i},\bx_{i}) \, \kappa(\bx_{j},\bx_{j})}}
	\leq \gamma,
\end{equation*}
then we have
\begin{equation*}
	\mathop{\max_{i,j=1\cdots n}}_{i \neq j} 
	| \kappa(\bx_{i},\bx_{j}) |
	\leq \gamma 
	\mathop{\max_{i,j=1\cdots n}}_{i \neq j}
	\sqrt{\kappa(\bx_{i},\bx_{i}) \, \kappa(\bx_{j},\bx_{j})}.
\end{equation*}
Thus, $\normH{\kappa(\bx_{i},\cdot) - \kappa(\bx_{j},\cdot)}^2$ from the right-hand-side of equation \eqref{eq:dist_equality} is lower-bounded by
\begin{equation*}
	\kappa(\bx_{i},\bx_{i}) 
	- 2 \, \gamma \sqrt{\kappa(\bx_{i},\bx_{i}) \, \kappa(\bx_{j},\bx_{j})}
	 + \kappa(\bx_{j},\bx_{j}). 
\end{equation*}
Therefore, to complete the proof, it is sufficient to show that this expression is always strictly positive. Indeed, it is a quadratic polynomial of the form $u^2 -2 \gamma u v + v^2$ where $u=\sqrt{\kappa(\bx_{i},\bx_{i})}$ and $v=\sqrt{\kappa(\bx_{j},\bx_{j})}$ (this form is valid since $\kappa(\bx,\bx) = \normH{\kappa(\bx,\cdot)}^2 >0$ for any $\bx \in \X$). Considering the roots of this quadratic polynomial with respect to $u$, its discriminant is $4 \, \kappa(\bx_{j},\bx_{j}) (\gamma^2 - 1)$, which is strictly negative since $\gamma \in [\, 0 \;; 1\,[$ and $\kappa(\bx_{j},\bx_{j})$ cannot be zero. Therefore, the polynomial has no real roots, and it is strictly positive.

Finally, for any $\gamma$-Babel dictionary, we have
\begin{align*}
    \mathop{\min_{i,j=1\cdots n}}_{i \neq j} & \normH{\kappa(\bx_{i},\cdot) - \kappa(\bx_{j},\cdot)}^2
\\   
 &= \mathop{\min_{i,j=1\cdots n}}_{i \neq j} \kappa(\bx_{i},\bx_{i}) - 2 \kappa(\bx_{i},\bx_{j}) + \kappa(\bx_{j},\bx_{j})
\\   &\geq 2r^2 - 2 \, \mathop{\max_{i,j=1\cdots n}}_{i \neq j} |\kappa(\bx_{i},\bx_{j})|,
\\   &\geq 2r^2 - 2 \, \gamma,
\end{align*}
which is strictly positive when $\gamma < r^2$.
\end{proof}

\subsection{Entropy in the RKHS}\label{sec:entropy_H_2}

The entropy in the feature space provides a measure of diversity of the atoms distribution. In the following, we show that the entropy estimated in the feature space is lower-bounded, with a bound expressed in terms of a diversity measure. 

We denote by $P_{_\H}(\bx)$ the distribution associated to the kernel functions in the feature space, namely by definition $P_{_\H}(\bx) = P(\kappa(\bx,\cdot))$. The entropy in the RKHS is given by expression \eqref{eq:entropy_int} where $P(\bx)$ is substituted with $P_{_\H}(\bx)$, yielding\footnote{The expectation in a RKHS, as in \eqref{eq:entropy_int_H}, was previously investigated in the literature. The notion of embedding a Borel probability measure $P$, defined on the topological space $\X$, into a RKHS $\H$ was studied in detail in \cite{SriperumbudurVangeepuram:2010}, with $\int_\X \kappa(\bx,\cdot) \, dP(\bx)$. For an algorithmic use, see \cite{12.isit} for a one-class classifier.}
\begin{equation}\label{eq:entropy_int_H}
	\frac{1}{1-\alpha} \log \int_\X \big(P_{_\H}(\bx)\big)^\alpha \, d\bx.
\end{equation}
By approximating the integral in this expression with the set $\{\bx_{1}, \bx_{2}, \ldots, \bx_{n}\}$, we get
\begin{equation*}
	\frac{1}{1-\alpha} \log \sum_{j=1}^n \big(P_{_\H}(\bx_{j})\big)^\alpha.
\end{equation*}
The distribution $P_{_\H}(\cdot)$ is estimated with the Parzen window estimator. The use of a radial function $w(\cdot)$ defined in the feature space $\H$ yields
\begin{equation*}
	\widehat{P}_{_\H}(\bx)=\frac{1}{n}\sum_{j=1}^n w(\normH{\kappa(\bx,\cdot) - \kappa(\bx_{j},\cdot)}).
\end{equation*}

Examples of radial functions are --- up to a scaling factor to ensure the integration to one --- the gaussian, the radial-based exponential and the inverse mutliquadratic kernels, given in \tablename~\ref{tab:kernels} and applied here in the feature space. Radial kernels are monotonically decreasing in the distance, namely $\kappa(\bx_{i},\bx_{j})$ grows when $\|\bx_{i}-\bx_{j}\|$ is decreasing. This statement results from the following lemma; See also \cite[Proposition~5]{Cucker02onthe}.
\begin{lemma}\label{th:monotonic}
Any kernel $\kappa$, of the form $\kappa(\bx_{i},\bx_{j}) = g(\|\bx_{i} - \bx_{j}\|^2)$ with $g \colon (0,\infty) \to \R$, is positive definite if $g(\cdot)$ is completely monotonic, namely its $k$-th derivative $g^{(k)}$ satisfies $(-1)^k g^{(k)}(r) \geq 0$ for any $r,k\geq 0$.
\end{lemma}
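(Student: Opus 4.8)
The plan is to recognize Lemma~\ref{th:monotonic} as the classical Schoenberg--Bochner criterion characterizing positive definite radial kernels through complete monotonicity, and to prove it by exhibiting $g$ as a scale-mixture of Gaussians. The key analytic ingredient is the \emph{Hausdorff--Bernstein--Widder theorem}, which states that a function $g\colon(0,\infty)\to\R$ is completely monotonic if and only if it is the Laplace transform of a nonnegative measure $\mu$, that is, $g(s) = \int_0^\infty e^{-s t}\, d\mu(t)$ for all $s>0$. First I would invoke this representation on $g$, so that, setting $s = \|\bx_i-\bx_j\|^2$, the kernel becomes
\begin{equation*}
	\kappa(\bx_i,\bx_j) = \int_0^\infty \exp\!\left(-t\,\|\bx_i-\bx_j\|^2\right) d\mu(t).
\end{equation*}

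Next I would establish positive definiteness by showing the integrand is, for each fixed $t>0$, a positive definite kernel, and that positive definiteness is preserved under nonnegative integration. For the pointwise claim, note that $\exp(-t\|\bx_i-\bx_j\|^2)$ is (a rescaling of) the Gaussian kernel, whose positive definiteness is standard; it can be seen directly because the Gaussian is itself the Fourier transform of a nonnegative function, so for any finite set of points and any real coefficients $c_1,\dots,c_m$ the quadratic form $\sum_{i,j} c_i c_j \exp(-t\|\bx_i-\bx_j\|^2)$ equals an integral of $|\sum_i c_i e^{\mathrm{i}\langle\omega,\bx_i\rangle}|^2$ against a nonnegative weight, hence is nonnegative. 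Then, because $d\mu(t)\geq 0$, the quadratic form associated with $\kappa$ is
\begin{equation*}
	\sum_{i,j} c_i c_j\, \kappa(\bx_i,\bx_j) = \int_0^\infty \Bigg(\sum_{i,j} c_i c_j\, e^{-t\|\bx_i-\bx_j\|^2}\Bigg) d\mu(t) \geq 0,
\end{equation*}
since the bracketed integrand is nonnegative for every $t$. This yields positive definiteness of $\kappa$.

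The main obstacle is the Hausdorff--Bernstein--Widder representation itself: translating the hypothesis $(-1)^k g^{(k)}(r)\geq 0$ into the existence of the representing measure $\mu$ is the nontrivial heart of the argument, and I would import it as a known theorem from classical analysis rather than reprove it. A secondary subtlety worth flagging is the behavior at $s=0$ and the boundary of the domain $(0,\infty)$: the representation holds on the open half-line, and one should observe that the kernel is defined through $\|\bx_i-\bx_j\|^2$, which vanishes on the diagonal, so one either works with distinct points or passes to the limit using continuity of $g$, noting that the diagonal contributes the finite value $g(0^+)=\int_0^\infty d\mu(t)$. I expect the monotonicity remark preceding the lemma --- that $\kappa(\bx_i,\bx_j)$ grows as $\|\bx_i-\bx_j\|$ decreases --- to follow immediately once the representation is in hand, since each exponential $e^{-t s}$ is decreasing in $s$ and the mixture of decreasing functions against a nonnegative measure is again decreasing.
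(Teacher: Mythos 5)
Your proof is correct, but there is nothing in the paper to compare it against: the paper never proves this lemma, deferring instead to the citation \cite[Proposition~5]{Cucker02onthe}. What you have written is precisely the classical Schoenberg--Bernstein--Widder argument that underlies that cited result --- represent $g$ as the Laplace transform of a nonnegative measure, note that each $\exp\left(-t\|\bx_i-\bx_j\|^2\right)$ is a positive definite Gaussian kernel by Bochner's theorem, and pass the nonnegativity of the quadratic form through the nonnegative mixing measure --- so your proposal in effect supplies the missing proof rather than an alternative one. Two minor points to keep in mind: a possible atom of $\mu$ at $t=0$ contributes the constant kernel, whose quadratic form is $\big(\sum_i c_i\big)^2 \geq 0$, so it is harmless; and ``positive definite'' here is meant in the kernel sense of a nonnegative quadratic form (positive semidefinite Gram matrices), which is exactly what your mixture argument delivers, consistent with your handling of the diagonal via $g(0^+)=\int_0^\infty d\mu(t)$.
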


\begin{theorem}\label{th:proba_estim_H}
Consider an overcomplete kernel dictionary with a lower bound $\epsilon$ on its distance measure, or any bounded diversity measure as given in Theorem~\ref{th:topo_H}. A Parzen window estimator, estimated over the dictionary atoms in the feature space, is upper-bounded by $w(\epsilon)$, where $w(\cdot)$ is the used window function.
\end{theorem}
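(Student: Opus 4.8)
The plan is to collapse every case of the hypothesis onto a single statement about pairwise distances, and then exploit the monotonicity of the radial window function. First I would invoke Theorem~\ref{th:topo_H}: whether the dictionary has a non-zero approximation measure, a non-unit coherence measure, or a Babel measure below $r^2$, each of these already guarantees that the distance measure is bounded away from zero. Hence in all cases it suffices to work from a single common lower bound $\epsilon>0$ on the pairwise distances, namely $\normH{\kappa(\bx_i,\cdot) - \kappa(\bx_j,\cdot)} \geq \epsilon$ for all $i \neq j$, and to prove the claim under this one hypothesis.

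The central ingredient is that the window function $w$ is monotonically non-increasing in its argument. Each admissible window (gaussian, radial-based exponential, inverse multiquadratic) is of the form $w(\|\cdot\|) = g(\|\cdot\|^2)$ with $g$ completely monotonic, as in Lemma~\ref{th:monotonic}; complete monotonicity gives in particular $-g'(\cdot) \geq 0$, so $g$, and therefore $w$ viewed as a function of the distance, is non-increasing. Consequently, whenever a distance exceeds $\epsilon$, the window value it produces is at most $w(\epsilon)$.

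With these two facts in hand the bound follows directly. Evaluating the feature-space Parzen estimator $\widehat{P}_{_\H}$ over the dictionary atoms, each contributing term has the form $w(\normH{\kappa(\bx_i,\cdot) - \kappa(\bx_j,\cdot)})$ with $i \neq j$, so every such distance is at least $\epsilon$ and every such term is at most $w(\epsilon)$ by monotonicity. Replacing each term in the average by its maximal value $w(\epsilon)$ and accounting for the normalization then yields $\widehat{P}_{_\H} \leq w(\epsilon)$, as claimed.

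The main obstacle, and really the only nontrivial point, is securing the monotonicity of $w$, which is precisely the content of Lemma~\ref{th:monotonic}: one must verify that complete monotonicity of the generator $g$ forces $w$ to decrease in the distance. A secondary subtlety is the treatment of the self-term: since the estimate is built from distances between distinct atoms, the relevant distances are exactly the genuine pairwise distances bounded below by $\epsilon$, so the degenerate zero distance of an atom to itself must be excluded and does not enter the bound. Once these points are settled, the remaining inequality is a one-line application of monotonicity together with the distance lower bound furnished by Theorem~\ref{th:topo_H}.
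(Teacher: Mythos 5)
Your proposal is correct and takes essentially the same route as the paper's proof: reduce every hypothesis to a pairwise-distance lower bound $\epsilon$ via Theorem~\ref{th:topo_H}, bound each window term by $w(\epsilon)$ using the monotone decrease of $w$ (Lemma~\ref{th:monotonic}), and average. Your treatment of the self-term is in fact more careful than the paper's: its proof sums over all $j$, so when the estimator is evaluated at an atom $\bx_i$ (as it is in Corollary~\ref{th:entropy_H}) the $j=i$ term equals $w(0) > w(\epsilon)$ and the claimed termwise inequality fails, whereas your leave-one-out reading of the estimator is what actually makes the stated bound hold at the atoms.
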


\begin{proof}
The proof is follows from 
\begin{align*}
	\widehat{P}_{_\H}(\bx) 
	&= \frac{1}{n}\sum_{j=1}^n w( \normH{\kappa(\bx,\cdot) - \kappa(\bx_{j},\cdot)}) 
	\\&< \frac{1}{n}\sum_{j=1}^n w(\epsilon) 
	\\&= w(\epsilon),
\end{align*}
where the inequality is due to the monotonically decreasing property of the window function $w$ and Theorem~\ref{th:topo_H}.\\
\end{proof}
This theorem is the main building block of the following corollary that provides lower bounds on the entropy, with the Shannon entropy and generalizing to the R\'enyi entropy for any order $\alpha > 1$.

\begin{corollary}\label{th:entropy_H}
Consider an overcomplete kernel dictionary with a lower bound $\epsilon$ on its distance measure, or any bounded diversity measure as given in Theorem~\ref{th:topo_H}. The Shannon entropy and the generalized R\'enyi entropy for any order $\alpha > 1$ are lower bounded by $-n \, w(\epsilon) \log w(\epsilon)$ and $\frac{1}{1-\alpha} \log \big( n \, \big( w(\epsilon) \big)^\alpha \big)$, respectively, where $w(\cdot)$ is the used window function.
\end{corollary}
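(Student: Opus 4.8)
The plan is to derive both bounds from the single pointwise estimate of Theorem~\ref{th:proba_estim_H}, namely $\widehat{P}_{_\H}(\bx) < w(\epsilon)$, which holds uniformly once the dictionary has a distance measure bounded below by $\epsilon$ (or, via Theorem~\ref{th:topo_H}, any of the other bounded diversity measures). First I would substitute the Parzen estimate $\widehat{P}_{_\H}$ into the discrete feature-space entropies, so that the Shannon entropy reads $-\sum_{j=1}^n \widehat{P}_{_\H}(\bx_j) \log \widehat{P}_{_\H}(\bx_j)$ and the R\'enyi entropy of order $\alpha$ reads $\frac{1}{1-\alpha} \log \sum_{j=1}^n \big(\widehat{P}_{_\H}(\bx_j)\big)^\alpha$. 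Everything then reduces to controlling these two scalar sums through the uniform bound $\widehat{P}_{_\H}(\bx_j) < w(\epsilon)$, so the whole argument hinges on turning that bound into a bound on each summand.

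For the R\'enyi entropy with $\alpha > 1$ the argument is direct. Since $t \mapsto t^\alpha$ is increasing on $[0,\infty)$, the uniform bound gives $\big(\widehat{P}_{_\H}(\bx_j)\big)^\alpha < \big(w(\epsilon)\big)^\alpha$ for each $j$; summing the $n$ terms yields $\sum_{j=1}^n \big(\widehat{P}_{_\H}(\bx_j)\big)^\alpha < n \big(w(\epsilon)\big)^\alpha$, and applying the (increasing) logarithm preserves this inequality. The decisive final step is that the prefactor $\frac{1}{1-\alpha}$ is negative for $\alpha > 1$, so multiplying through reverses the inequality and delivers exactly $\frac{1}{1-\alpha} \log\big(n\,(w(\epsilon))^\alpha\big)$ as a lower bound on $H_\alpha$. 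I would emphasize that this sign reversal is the only subtle point here; otherwise the R\'enyi step is essentially routine.

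For the Shannon entropy I would work term by term with the scalar map $g(p) = -p \log p$, invoking its monotonicity together with $\widehat{P}_{_\H}(\bx_j) < w(\epsilon)$ so that each of the $n$ summands is controlled by $g(w(\epsilon)) = -w(\epsilon)\log w(\epsilon)$, and then summing to produce the claimed $-n\,w(\epsilon)\log w(\epsilon)$; alternatively one can read off the same expression as the $\alpha \to 1$ form of the R\'enyi computation. The delicate point, and what I expect to be the main obstacle, is that $g(p) = -p\log p$ is \emph{not} monotone on all of $(0,1)$: it increases on $(0,1/e)$ and decreases on $(1/e,1)$, so one must track carefully which regime $w(\epsilon)$ falls into and, correspondingly, the direction of each term-wise inequality. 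A secondary technical caveat is that evaluating the estimator at the dictionary points $\bx_j$ themselves introduces a self-distance term, which must be argued to be negligible (of order $1/n$) so that Theorem~\ref{th:proba_estim_H} can legitimately be applied at those points. Resolving these two points is where I would concentrate the effort, the R\'enyi case being comparatively immediate.
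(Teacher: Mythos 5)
Your R\'enyi argument for $\alpha>1$ is precisely the paper's own proof: raise the uniform bound $\widehat{P}_{_\H}(\bx_j) < w(\epsilon)$ of Theorem~\ref{th:proba_estim_H} to the power $\alpha$, sum over $j$, take the logarithm, and let the negative prefactor $\frac{1}{1-\alpha}$ reverse the inequality. That half of your proposal is complete and matches the paper.

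On the Shannon half, the obstacle you flag is genuine, and you should know that the paper does not overcome it --- its proof simply writes
\begin{equation*}
- \sum_{j=1}^n \widehat{P}_{_\H}(\bx_{j}) \log \widehat{P}_{_\H}(\bx_{j}) > -n \, w(\epsilon) \log w(\epsilon)
\end{equation*}
as if it followed term by term from $\widehat{P}_{_\H}(\bx_j) < w(\epsilon)$, i.e., as if $g(p)=-p\log p$ were decreasing. As you note, $g$ is increasing on $(0,1/e)$ and $g(p)\to 0$ as $p\to 0^+$, so the term-wise inequality runs in the wrong direction whenever $\widehat{P}_{_\H}(\bx_j)$ is small. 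This is not a removable technicality: for a dictionary whose atoms are far apart relative to $\epsilon$, one has $\widehat{P}_{_\H}(\bx_j)\approx w(0)/n$, so the Shannon sum is of order $w(0)\log n$, whereas for fixed $w(\epsilon)\in(0,1)$ the claimed lower bound $-n\,w(\epsilon)\log w(\epsilon)$ grows linearly in $n$; the stated Shannon bound therefore fails for $n$ large. No bookkeeping about which side of $1/e$ the value $w(\epsilon)$ lies on can rescue a derivation from the uniform upper bound alone --- one would need an additional hypothesis, such as a matching lower bound on the $\widehat{P}_{_\H}(\bx_j)$ --- so your reluctance to assert this step is sound judgment, not a deficiency of your proposal. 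Your secondary caveat is also real and also ignored by the paper: at an atom the Parzen sum contains the self-term $w(0)/n > w(\epsilon)/n$, so the argument proving Theorem~\ref{th:proba_estim_H} yields only $\widehat{P}_{_\H}(\bx_j) < \frac{1}{n}\,w(0) + \frac{n-1}{n}\,w(\epsilon)$ at the dictionary points, which, as you say, is $w(\epsilon)$ up to an $O(1/n)$ correction and thus perturbs the R\'enyi bound only by $O(1/n)$.
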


\begin{proof}
From Theorem~\ref{th:proba_estim_H}, we have $\widehat{P}_{_\H}(\bx) < w(\epsilon)$ for any window function $w(\cdot)$. This yields for the Shannon entropy: 
\begin{equation*}
- \sum_{j=1}^n \widehat{P}_{_\H}(\bx_{j}) \log \widehat{P}_{_\H}(\bx_{j}) 
  > -n \, w(\epsilon) \log w(\epsilon).
\end{equation*}
More generally, the R\'enyi entropy for any order $\alpha$ is estimated by
\begin{equation*}
\frac{1}{1-\alpha} \log \sum_{j=1}^n \big(\widehat{P}_{_\H}(\bx_{j})\big)^\alpha 
 > \frac{1}{1-\alpha} \log \Big( n \, \big( w(\epsilon) \big)^\alpha \Big),
\end{equation*}
where we have used Theorem~\ref{th:proba_estim_H} and $\alpha >1$.

\end{proof}

These results illustrate how the atoms of an overcomplete dictionary are uniformly spread in the feature space.


\section{Final remarks}\label{sec:final_remarks}

This paper provided a framework to examine linear and kernel dictionaries with the notion of entropy from information theory. By examining different  diversity measures, we showed that overcomplete dictionaries have lower bounds on the entropy. While various definitions were explored here, these results open the door to bridging the gap between information theory and diversity measures for the analysis and synthesis of overcomplete dictionaries, in both input and feature spaces. As of futur works, we are studying connections to the entropy component analysis \cite{KECA}, in order to provide a thorough examination and develop an online learning approach.

The conducted analysis, illustrated here within the framework of kernel-based learning algorithms, can be easily extended to other machines such as gaussian processes and neural networks. It is worth noting that this work does not devise any particular diversity measure for quantifying overcomplete dictionaries, in the same spirit as our recent work \cite{14.sparse.errors,14.sparse.eigen}.

%




\bibliography{biblio_ph,bibdesk_Paul}

\begin{thebibliography}{10}

\bibitem{Elad2010book}
M.~Elad, {\em Sparse and Redundant Representations: From Theory to Applications
  in Signal and Image Processing}.
\newblock Springer, 2010.

\bibitem{Mal98}
S.~Mallat, {\em A wavelet tour of signal processing}.
\newblock Academic Press, 1998.

\bibitem{PCA}
I.~Jolliffe, {\em Principal Component Analysis}.
\newblock New York, NY, USA: Springer-Verlag, 1986.

\bibitem{Engan99}
K.~Engan, S.~Aase, and J.~Hakon~Husoy, ``Method of optimal directions for frame
  design,'' in {\em Acoustics, Speech, and Signal Processing, 1999.
  Proceedings., 1999 IEEE International Conference on}, vol.~5, pp.~2443--2446
  vol.5, 1999.

\bibitem{KSVD}
M.~Aharon, M.~Elad, and A.~Bruckstein, ``{K-SVD}: An algorithm for designing
  overcomplete dictionaries for sparse representation,'' {\em Signal
  Processing, IEEE Transactions on}, vol.~54, no.~11, pp.~4311--4322, 2006.

\bibitem{Mairal09}
J.~Mairal, F.~Bach, J.~Ponce, and G.~Sapiro, ``Online dictionary learning for
  sparse coding,'' in {\em Proceedings of the 26th Annual International
  Conference on Machine Learning}, ICML '09, (New York, NY, USA), pp.~689--696,
  ACM, 2009.

\bibitem{Tropp06Just_Relax}
J.~Tropp, ``Just relax: convex programming methods for identifying sparse
  signals in noise,'' {\em Information Theory, IEEE Transactions on}, vol.~52,
  pp.~1030--1051, March 2006.

\bibitem{Tro04}
J.~A. Tropp, ``Greed is good: algorithmic results for sparse approximation,''
  {\em IEEE Trans. Information Theory}, vol.~50, pp.~2231--2242, 2004.

\bibitem{Mallat93matchingpursuit}
S.~Mallat and Z.~Zhang, ``Matching pursuit with time-frequency dictionaries,''
  {\em IEEE Transactions on Signal Processing}, vol.~41, pp.~3397--3415, 1993.

\bibitem{Don03}
D.~L. Donoho and M.~Elad, ``Optimally sparse representation in general
  (nonorthogonal) dictionaries via $\ell_1$ minimization,'' {\em Proceedings -
  National Academy of Sciences (PNAS)}, vol.~100, pp.~2197--2202, March 2003.

\bibitem{Gil03a}
A.~C. Gilbert, S.~Muthukrishnan, and M.~J. Strauss, ``Approximation of
  functions over redundant dictionaries using coherence,'' in {\em Proc. 14-th
  annual ACM-SIAM symposium on Discrete algorithms (SODA)}, (Philadelphia, PA,
  USA), pp.~243--252, Society for Industrial and Applied Mathematics, 2003.

\bibitem{Platt91}
J.~Platt, ``A resource-allocating network for function interpolation,'' {\em
  Neural Comput.}, vol.~3, pp.~213--225, June 1991.

\bibitem{Vukovic2013}
N.~Vukovi\'{c} and Z.~Miljkovi\'{c}, ``A growing and pruning sequential
  learning algorithm of hyper basis function neural network for function
  approximation,'' {\em Neural Netw.}, vol.~46, pp.~210--226, Oct. 2013.

\bibitem{Csato02}
L.~Csat\'o and M.~Opper, ``Sparse online gaussian processes,'' {\em Neural
  Computation}, vol.~14, pp.~641--668, 2002.

\bibitem{Pokharel2009}
P.~P. Pokharel, W.~Liu, and J.~C. Principe, ``Kernel least mean square
  algorithm with constrained growth,'' {\em Signal Processing}, vol.~89, no.~3,
  pp.~257 -- 265, 2009.

\bibitem{12.tpami}
P.~Honeine, ``Online kernel principal component analysis: a reduced-order
  model,'' {\em IEEE Transactions on Pattern Analysis and Machine
  Intelligence}, vol.~34, pp.~1814--1826, September 2012.

\bibitem{Cover06}
T.~M. Cover and J.~A. Thomas, {\em Elements of information theory}.
\newblock Wiley Series in Telecommunications and Signal Processing,
  Wiley-Interscience, 2nd edition~ed., 2006.

\bibitem{Tsallis09}
C.~{Tsallis}, ``Nonadditive entropy: The concept and its use,'' {\em European
  Physical Journal A}, vol.~40, pp.~257--266, June 2009.

\bibitem{Tsallis_entropy}
J.~Cartwright, ``Roll over, boltzmann,'' {\em Physics World}, May 2014.

\bibitem{Gir02}
M.~Girolami, ``Orthogonal series density estimation and the kernel eigenvalue
  problem,'' {\em Neural Computation}, vol.~14, pp.~669--688, 2002.

\bibitem{LSSVM}
J.~A.~K. Suykens, T.~V. Gestel, J.~D. Brabanter, B.~D. Moor, and J.~Vandewalle,
  {\em Least Squares Support Vector Machines}.
\newblock Singapore: World Scientific Pub. Co., 2002.

\bibitem{Jung06}
T.~Jung and D.~Polani, ``Sequential learning with ls-svm for large-scale data
  sets,'' in {\em Proceedings of the 16th International Conference on
  Artificial Neural Networks - Volume Part II}, ICANN'06, (Berlin, Heidelberg),
  pp.~381--390, Springer-Verlag, 2006.

\bibitem{Ric09.tsp}
C.~Richard, J.~C.~M. Bermudez, and P.~Honeine, ``Online prediction of time
  series data with kernels,'' {\em IEEE Transactions on Signal Processing},
  vol.~57, pp.~1058--1067, March 2009.

\bibitem{Has09}
T.~Hastie, R.~Tibshirani, and J.~Friedman, {\em The Elements of Statistical
  Learning: Data Mining, Inference, and Prediction}.
\newblock Springer Series in Statistics, Springer, second edition~ed., 2009.

\bibitem{dict_learn96}
B.~A. Olshausen and D.~J. Field, ``Emergence of simple-cell receptive field
  properties by learning a sparse code for natural images,'' {\em Nature},
  vol.~381, no.~6583, pp.~607--609, 1996.

\bibitem{gpml}
C.~E. Rasmussen and C.~Williams, {\em {Gaussian Processes for Machine
  Learning}}.
\newblock MIT Press, 2006.

\bibitem{Shawetaylor_Cristianini}
J.~Shawe-Taylor and N.~Cristianini, {\em Kernel Methods for Pattern Analysis}.
\newblock Cambridge, UK: Cambridge University Press, 2004.

\bibitem{Vap95}
V.~Vapnik, {\em The Nature of Statistical Learning Theory}.
\newblock New York, NY, USA: Springer-Verlag, 1995.

\bibitem{11.spm}
P.~Honeine and C.~Richard, ``Preimage problem in kernel-based machine
  learning,'' {\em IEEE Signal Processing Magazine}, vol.~28, pp.~77--88, March
  2011.

\bibitem{13.spl.dictionary}
C.~Saidé, R.~Lengellé, P.~Honeine, and R.~Achkar, ``Online kernel adaptive
  algorithms with dictionary adaptation for mimo models,'' {\em IEEE Signal
  Processing Letters}, vol.~20, pp.~535--538, May 2013.

\bibitem{14.sp.dictionary}
C.~Saidé, R.~Lengellé, P.~Honeine, C.~Richard, and R.~Achkar, ``Nonlinear
  adaptive filtering using kernel-based algorithms with dictionary
  adaptation,'' {\em International Journal of Adaptive Control and Signal
  Processing}, 2014 submitted.

\bibitem{14.mlsp.nmf}
F.~Zhu, P.~Honeine, and M.~Kallas, ``Kernel non-negative matrix factorization
  without the pre-image problem,'' in {\em Proc. 24th IEEE workshop on Machine
  Learning for Signal Processing}, (Reims, France), 21--24 September 2014.

\bibitem{14.tpami.knmf}
F.~Zhu, P.~Honeine, and M.~Kallas, ``Kernel nonnegative matrix factorization
  without the curse of the pre-image,'' {\em IEEE Transactions on Pattern
  Analysis and Machine Intelligence}, 2014 submitted.

\bibitem{Huang05ageneralized}
G.~bin Huang, P.~Saratch, S.~Member, and N.~Sundararajan, ``A generalized
  growing and pruning rbf (ggap-rbf) neural network for function
  approximation,'' {\em IEEE Transactions on Neural Networks}, vol.~16,
  pp.~57--67, 2005.

\bibitem{Bau01}
G.~Baudat and F.~Anouar, ``Kernel-based methods and function approximation,''
  in {\em In International Joint Conference on Neural Networks (IJCNN)},
  vol.~5, (Washington, DC, USA), pp.~1244--1249, July 2001.

\bibitem{Csato01}
L.~Csat\'o and M.~Opper, ``Sparse representation for gaussian process models,''
  in {\em Advances in Neural Information Processing Systems 13}, pp.~444--450,
  MIT Press, 2001.

\bibitem{Eng04}
Y.~Engel, S.~Mannor, and R.~Meir, ``The kernel recursive least squares
  algorithm,'' {\em IEEE Trans. Signal Processing}, vol.~52, no.~8,
  pp.~2275--2285, 2004.

\bibitem{Donoho01uncertaintyprinciples}
D.~L. Donoho and X.~Huo, ``Uncertainty principles and ideal atomic
  decomposition,'' {\em IEEE Trans. Information Theory}, vol.~47,
  pp.~2845--2862, March 2001.

\bibitem{Hon07.isit}
P.~Honeine, C.~Richard, and J.~C.~M. Bermudez, ``On-line nonlinear sparse
  approximation of functions,'' in {\em Proc. IEEE International Symposium on
  Information Theory}, (Nice, France), pp.~956--960, June 2007.

\bibitem{Gil03b}
A.~C. Gilbert, S.~Muthukrishnan, M.~J. Strauss, and J.~Tropp, ``Improved sparse
  approximation over quasi-incoherent dictionaries,'' in {\em International
  Conference on Image Processing (ICIP)}, vol.~1, (Barcelona, Spain),
  pp.~37--40, Sept. 2003.

\bibitem{Fan2014}
H.~Fan, Q.~Song, and S.~B. Shrestha, ``Online learning with kernel regularized
  least mean square algorithms,'' {\em Knowledge-Based Systems}, vol.~59,
  no.~0, pp.~21 -- 32, 2014.

\bibitem{KECA}
R.~Jenssen, ``Kernel entropy component analysis,'' {\em IEEE Transactions on
  Pattern Analysis and Machine Intelligence}, vol.~32, no.~5, pp.~847--860,
  2010.

\bibitem{SriperumbudurVangeepuram:2010}
B.~K. Sriperumbudur~Vangeepuram, {\em Reproducing Kernel Space Embeddings and
  Metrics on Probability Measures}.
\newblock PhD thesis, Electrical Engineering (Signal and Image Processing),
  University of California at San Diego, La Jolla, CA, USA, 2010.
\newblock AAI3432386.

\bibitem{12.isit}
Z.~Noumir, P.~Honeine, and C.~Richard, ``On simple one-class classification
  methods,'' in {\em Proc. IEEE International Symposium on Information Theory},
  (MIT, Cambridge (MA), USA), pp.~2022--2026, 1--6 July 2012.

\bibitem{Cucker02onthe}
F.~Cucker and S.~Smale, ``On the mathematical foundations of learning,'' {\em
  Bulletin of the American Mathematical Society}, vol.~39, pp.~1--49, 2002.

\bibitem{14.sparse.errors}
P.~Honeine, ``Approximation errors of online sparsification criteria,'' {\em
  IEEE Transactions on Signal Processing}, 2014 submitted.

\bibitem{14.sparse.eigen}
P.~Honeine, ``Analyzing sparse dictionaries for online learning with kernels,''
  {\em IEEE Transactions on Signal Processing}, 2014 submitted.

\end{thebibliography}
\bibliographystyle{ieeetr}

\begin{biography}[{}]
{Paul Honeine} (M'07) was born in Beirut, Lebanon, on October 2, 1977. He received the Dipl.-Ing. degree in mechanical engineering in 2002 and the M.Sc. degree in industrial control in 2003, both from the Faculty of Engineering, the Lebanese University, Lebanon. In 2007, he received the Ph.D. degree in Systems Optimisation and Security from the University of Technology of Troyes, France, and was a Postdoctoral Research associate with the Systems Modeling and Dependability Laboratory, from 2007 to 2008. Since September 2008, he has been an assistant Professor at the University of Technology of Troyes, France. His research interests include nonstationary signal analysis and classification, nonlinear and statistical signal processing, sparse representations, machine learning. Of particular interest are applications to (wireless) sensor networks, biomedical signal processing, hyperspectral imagery and nonlinear adaptive system identification. He is the co-author (with C. Richard) of the 2009 Best Paper Award at the IEEE Workshop on Machine Learning for Signal Processing. Over the past 5 years, he has published more than 100 peer-reviewed papers. 
\end{biography}


\end{document}